%% file: HOM_sym_application.tex
\documentclass[aps,pra,reprint, amsmath, amssymb, aps, superscriptaddress,longbibliography,nofootinbib]{revtex4-2}

\usepackage{hyperref}
\hypersetup{colorlinks=true,linkcolor=blue,citecolor = blue, urlcolor=blue}

\usepackage{graphicx}
\usepackage{bm}
\usepackage[utf8]{inputenc}
\usepackage{amssymb}
\usepackage{amsthm}
\usepackage{dsfont}
\usepackage{enumitem}
\usepackage{physics}
\usepackage{cancel}
\usepackage{soul}
\usepackage{orcidlink}
\usepackage{tikzit}
\input{style.tikzstyles}

\newcommand{\C}{\mathbb{C}}

\newcommand{\1}{\mathds{1}}
\renewcommand{\P}{\mathbb{P}}
\newcommand{\mycomment}[1]{}
\newcommand{\vac}{\ket{\text{vac}}}

\newtheorem{thm}{Theorem}

\begin{document}

\title{Spatial symmetry in few-photon interference and quantum metrology}

\author{\'Eloi Descamps\orcidlink{0000-0002-6911-452X}}
\email[Contact author: ]{eloi.descamps@u-paris.fr}
\affiliation{Université Paris Cité, CNRS, Laboratoire Matériaux et Phénomènes Quantiques, 75013 Paris, France}
\author{Pérola Milman\orcidlink{0000-0002-7579-7742}}
\affiliation{Université Paris Cité, CNRS, Laboratoire Matériaux et Phénomènes Quantiques, 75013 Paris, France}

\date{\today}

\begin{abstract}
    In previous work [Physical Review Letters, 136(6):060807, 2026], we developed a symmetry-based framework for generalized Hong-Ou-Mandel interference and its applications to quantum metrology. Here, we apply this framework to experimentally accessible configurations involving a few photons and explore how it sheds new light on a variety of existing results while suggesting new experimental perspectives. We first reinterpret a two-photon Mach-Zehnder interferometer as a generalized Hong-Ou-Mandel experiment and show how its photon-counting statistics and metrological performance follow from the spatial symmetry of an effective probe state. We then study two coherently pumped parametric photon-pair sources, derive their complete output distribution, and connect source indistinguishability with spatial symmetry. Finally, we extend the discussion to a three-mode interferometer and show that the same symmetry principles provide a natural description of three-photon interference in tritter configurations.
\end{abstract}

\maketitle

\section{Introduction}
\label{sec: intro}
Multiphoton interference is a central resource for photonic quantum technologies, underlying protocols for quantum information processing, state engineering, and quantum metrology \cite{kok_linear_2007,polino_photonic_2020}. In quantum optics, interference depends on all degrees of freedom that may carry distinguishing information, including arrival time, spectrum, polarization, and spatial mode. The Hong-Ou-Mandel (HOM) effect \cite{hong_measurement_1987} is the paradigmatic two-photon example: two photons distributed into indistinguishable modes and entering the two input ports of a balanced beam splitter leave through the same output port. Beyond its original interpretation as a manifestation of bosonic interference, the HOM effect provides an operational measurement of modal indistinguishability \cite{mandel_coherence_1991,garcia-escartin_swap_2013} and has become a standard tool for characterizing photonic sources and their temporal and spectral structure \cite{legero_time-resolved_2003,ou_temporal_2006,bouchard_two-photon_2020}. Its generalizations now support applications ranging from spectroscopy and imaging to high-precision time-delay estimation \cite{dorfman_hong-ou-mandel_2021,lyons_attosecond-resolution_2018,ndagano_quantum_2022,jordan_quantum_2022}.

Beyond the usual two-photon configuration, interference rapidly acquires a richer structure. Arbitrary photon-number inputs generate nontrivial suppression patterns, while partial distinguishability introduces contributions associated with different particle permutations \cite{tichy_interference_2014,tichy_many-particle_2012,shchesnovich_partial_2015}. For three or more photons, pairwise overlaps are no longer sufficient to characterize the output statistics, which may also depend on collective quantities such as the triad phase \cite{menssen_distinguishability_2017,jones_distinguishability_2023}. Complete descriptions based on transition amplitudes, permanents, immanants, or permutation cycles provide powerful tools for calculating these statistics. However, they do not always make transparent which global property of the input state is accessed by a given coarse-grained photon-counting measurement. Suppression laws in balanced multiports already indicate that permutation symmetry provides a more direct route to some of these interference phenomena \cite{tichy_zero-transmission_2010,crespi_suppression_2015}.

In Ref.~\cite{descamps_time-frequency_2023}, we used spatial exchange symmetry to analyze HOM interferometry and its applications to time-frequency metrology. We subsequently developed a general framework relating the output photon-number distribution of a balanced beam splitter to the average spatial symmetry of an arbitrary input state \cite{descamps_role_2026}. Within this approach, the beam splitter maps exchange symmetry onto photon-number parity, so that a simple photon-counting measurement directly probes a symmetry property of the incoming light. This approach provided a natural way of assessing various interferometric situations and metrology protocols without first expanding every output amplitude. The construction also extends to discrete Fourier interferometers, where cyclic permutations of the input modes are mapped onto modular combinations of the output photon numbers. These relations avoid a complete expansion of the output state and separate the property being measured from the particular representation of the input state.

The generality of this framework may nevertheless obscure its physical content and its relevance to experimentally accessible situations. The purpose of the present work is to bridge this gap by applying it to a representative set of few-photon configurations. Our purpose is twofold: to recover known interference effects through a common and simpler argument, and to use the same argument to identify new metrological interpretations and optimality conditions. We first show that a two-photon Mach-Zehnder interferometer can be reinterpreted as a generalized HOM interferometer \footnote{Throughout this work, the terms HOM and MZI refer only to the interferometric geometry: an HOM geometry consists of an evolution followed by a balanced beam splitter, whereas an MZI includes an additional balanced beam splitter before the evolution. These labels impose no assumption on the input state or measurement scheme.} whose response is determined by the exchange symmetry of an effective probe state. This reveals how its symmetric and antisymmetric spectral components probe respectively the sum- and difference-frequency variables. We then consider two coherently pumped parametric photon-pair sources and show that the output coincidences measure the overlap between the complete two-photon states generated by the sources. The general mechanism applies both to spontaneous parametric down-conversion (SPDC) and to spontaneous four-wave mixing (SFWM); below we use SPDC terminology when discussing source-specific spectral properties. We further use the symmetry-parity correspondence to recover the nodal line of extended HOM interference \cite{alsing_extending_2022,alsing_hong-ou-mandel_2024,alsing_examination_2025} without explicitly calculating the full photon-number distribution.

Finally, we extend the analysis to three photons interfering in a balanced tritter, a configuration in which bosonic coalescence and genuinely three-photon distinguishability effects have been experimentally observed \cite{spagnolo_three-photon_2013,menssen_distinguishability_2017}. We show that the modular quantity $m_1+2m_2\pmod 3$ resolves the eigenspaces of the cyclic permutation operator. For independent photons, the corresponding symmetry expectation value is the cyclic product of their pairwise overlaps, whose phase is precisely the triad phase. For correlated states, the same measurement directly probes the cyclic symmetry of the complete three-photon spectrum. The symmetry framework also leads to a metrological interpretation of the tritter and identifies fixed-total-frequency states as optimal probes for local time-delay estimation.

The paper is organized as follows. Section~\ref{sec: framework} recalls the symmetry framework required throughout the paper. Section~\ref{sec: MZI} discusses the two-photon Mach-Zehnder interferometer, Sec.~\ref{sec: two sources} analyzes the two-source configuration, and Sec.~\ref{sec: parity and gen HOM} considers extended HOM interference. Section~\ref{sec: tritter} develops the three-mode generalization and its metrological application.

\section{Framework}
\label{sec: framework}
\subsection{Setting}
\label{subsec: setting}
We consider light entering a $n$-mode linear interferometer, as represented in Fig.~\ref{fig: general protocol}. Each spatial mode is described by creation operators $\hat a_j^\dagger(\omega)$, where $j\in\{0,\ldots,n-1\}$ is the spatial index and $\omega$ denotes the photon frequency.

The linear interferometer $\hat U$ is described by an $n\times n$ unitary matrix $U_{j,k}$, which defines the transformation of the creation operators as
\begin{equation}
    \hat a_j^\dagger(\omega)\mapsto \hat U\hat a_j^\dagger(\omega)\hat U^\dagger=\sum_{k=0}^{n-1} U_{kj}\hat a_k^\dagger(\omega).
\end{equation}
We assume that the interferometer acts only on the spatial degree of freedom and leaves the photon frequencies unchanged. Any finite-dimensional unitary transformation of the spatial modes can be decomposed into beam splitters and phase shifters \cite{reck_experimental_1994}. Throughout this paper, we focus on balanced transformations that mix the spatial modes symmetrically. In the two-mode case, the transformation is a balanced beam splitter (BS), for which we fix the convention\footnote{Since the Hadamard matrix is Hermitian, $\hat U=\hat U^\dagger$ with this convention.}
\begin{equation}\label{eq: bs}
    U=H=\frac{1}{\sqrt{2}}\begin{pmatrix}
        1&1\\1&-1
    \end{pmatrix}.
\end{equation}
We denote by $m_0,m_1,\dots,m_{n-1}$ the number of photons detected in each output mode.

Throughout the paper, the interferometers mix spatial modes, while frequency is treated as an additional (internal) degree of freedom left unchanged by the optical network. The same formalism applies to other internal degrees of freedom, such as polarization or transverse spatial structure. We call the modes transformed by the interferometer \emph{external} degrees of freedom and those left unchanged \emph{internal} degrees of freedom. Further background on the description and manipulation of photonic time-frequency states can be found in Ref.~\cite{descamps_phd_2026,fabre_time_2022,fabre_quantum_2020}.

An important application of the framework is quantum metrology \cite{braunstein_statistical_1994,paris_quantum_2009,giovannetti_advances_2011,polino_photonic_2020}. We use the interferometer and photon counting as a measurement device for estimating an unknown parameter $\theta$ encoded in the input state. Consider a known probe $\ket{\psi}$ undergoing the unitary evolution $\hat V(\theta)=e^{-i\hat H\theta}$ generated by a Hermitian operator $\hat H$. For $N$ independent repetitions, the classical and quantum Cramér--Rao bounds constrain the estimation uncertainty according to
\begin{equation}
    \delta\theta\geq \frac{1}{\sqrt{N \mathcal F}}\geq \frac{1}{\sqrt{N\mathcal Q}},
\end{equation}
where $\mathcal F$ denotes the Fisher information (FI) and $\mathcal Q$ denotes the quantum Fisher information (QFI). The Fisher information $\mathcal F$ is associated with a specific measurement device and assesses its efficiency in estimating the parameter $\theta$. If the outcomes $x\in X$ occur with probabilities $P(x|\theta)$, the Fisher information is
\begin{equation}
    \mathcal F=\sum_{x\in X}\frac{1}{P(x|\theta)}\left(\frac{\partial P(x|\theta)}{\partial \theta}\right)^2.
\end{equation}
The quantum Fisher information $\mathcal Q$ quantifies the optimal precision obtainable with any measurement device. It serves as a benchmark for assessing the efficiency of a concrete setup.  For the pure states considered here, it is four times the variance of the evolution generator,
\begin{equation}
    \mathcal Q=4\Delta^2\hat H.
\end{equation}

\begin{figure}[ht]
    \centering
    \scalebox{0.8}{\input{general_protocol.tikz}}
    \caption{Schematic representation of an interferometric metrology protocol. An initial optical probe $\ket{\psi}$ acquires a dependence on the unknown parameter through $\ket{\psi(\theta)}=e^{-i\hat H\theta}\ket{\psi}$. It is then injected into a linear interferometer $\hat U$, and the output photon-number distribution is measured. The measurement outcomes are then used to estimate the parameter $\theta$. When only the interference properties are of interest, the parameter-encoding stage can be omitted.}
    \label{fig: general protocol}
\end{figure}

\subsection{Symmetry at a beam splitter}
The core idea introduced in Ref.~\cite{descamps_time-frequency_2023} and developed in Ref.~\cite{descamps_role_2026} is to describe interference at a balanced beam splitter in terms of spatial symmetry. We define the operator $\hat S$ through the mode transformation
\begin{equation}
    \hat a_0^\dagger(\omega)\mapsto\hat a_1^\dagger(\omega), \qquad \hat a_1^\dagger(\omega)\mapsto\hat a_0^\dagger(\omega).
\end{equation}
Operationally, the operator $\hat S$ exchanges the two spatial modes. Its relevance to interference at a balanced beam splitter follows from the relation
\begin{equation}\label{eq: sym parity}
    \hat U\hat S\hat U=\hat \Pi_1,
\end{equation}
where $\hat\Pi_1$ is the parity operator of mode $j=1$, which determines whether that mode contains an even or odd number of photons. A derivation is provided in Ref.~\cite{descamps_role_2026} and is a direct consequence of the relation $X=HZH$ between the Pauli matrices $X$ and $Z$  and the Hadamard matrix $H$. Equation~\eqref{eq: sym parity} means that a balanced beam splitter maps spatial-exchange symmetry onto the photon-number parity of the second mode, and conversely; see Fig.~\ref{fig: BS sym parity}.
\begin{figure}[ht]
    \centering
    \scalebox{1}{\input{BS_sym_parity.tikz}}
    \caption{Transformation of symmetry and parity properties through a beam splitter. A balanced beam splitter $\hat U$, represented by the Hadamard matrix, maps spatial-exchange symmetry, described by $\hat S$, onto photon-number parity in mode $j=1$, described by $\hat\Pi_1$.}
    \label{fig: BS sym parity}
\end{figure}

For a state $\ket{\psi}$ entering the beam splitter, this identity determines a property of the output photon-number distribution. As shown in Ref.~\cite{descamps_role_2026}, if $m_1$ denotes the number of photons measured in output port $j=1$, then
\begin{equation}\label{eq: p even}
    \P\Big(m_1 \equiv 0\!\!\mod 2\Big)=\frac{1}{2}\left(1+\bra{\psi}\hat S\ket{\psi}\right).
\end{equation}
Thus, the probability of measuring an even photon number in arm $j=1$ is determined directly by the average spatial-exchange symmetry of the input state.

Applying this result to the usual HOM configuration, in which one photon enters each input port of the beam splitter, gives a simple expression for the coincidence probability $P_\text{c}$. Since the two photons produce a coincidence only when $m_1$ is odd, $P_\text{c}=\P(m_1\equiv1\!\!\mod 2)$, and hence
\begin{equation}\label{eq: pc HOM}
    P_\text{c}=\frac{1}{2}\left(1-\bra{\psi}\hat S\ket{\psi}\right).
\end{equation}
If the two photons occupy independent and identical internal modes, the state $\ket{\psi}$ is symmetric, $\hat S\ket{\psi}=\ket{\psi}$, and we recover $P_\text{c}=0$, the well-known HOM effect. Equation~\eqref{eq: pc HOM} identifies precisely the property that must be controlled to modify the coincidence probability. In particular, an antisymmetric state, $\hat S\ket{\psi}=-\ket{\psi}$, produces perfect antibunching.

For metrological purposes, we assume that an unknown parameter $\theta$ is encoded in the input state as $\ket{\psi(\theta)}=e^{-i\hat H\theta}\ket{\psi}$. Measuring the parity of the output photon number $m_1$ then provides information about $\theta$. In Ref.~\cite{descamps_role_2026}, we demonstrated that, near $\theta=0$, the Fisher information associated with this measurement takes the following form for a perfectly symmetric or antisymmetric state, $\hat S\ket{\psi}=\pm\ket{\psi}$:
\begin{equation}
    \mathcal F=\Delta^2(\hat H-\hat S\hat H\hat S)=4\Delta^2\left(\frac{\hat H-\hat S\hat H\hat S}{2}\right).
\end{equation}
The Fisher information is thus expressed as the variance of the antisymmetric part of the generator $\hat H$. Comparison with the quantum Fisher information, $\mathcal Q=4\Delta^2\hat H$, shows that the symmetry of the generator determines the efficiency and optimality of the measurement. As demonstrated in Ref.~\cite{descamps_time-frequency_2023}, this relation gives the precision of HOM metrology, but it also applies to the other configurations studied below.

The preceding result requires a probe with a well-defined exchange symmetry. It is therefore important to generate states that combine the desired symmetry with a large classical and quantum Fisher information. Such states may arise naturally in suitable SPDC configurations. More generally, we can use the following general idea: since the balanced beam splitter transforms input parity into output symmetry, a state with controlled photon-number parity can be used to prepare a probe in the required symmetry sector. Since this preparation also changes the state, its metrological performance must be verified for each physical generator. The examples below illustrate this procedure.

\subsection{Symmetry for higher-order interferometers}
\label{subsec: higher order interferometers}
The two-mode construction generalizes to arbitrary $n$. As shown in Ref.~\cite{descamps_role_2026}, the relevant balanced transformation is the discrete Fourier interferometer, with matrix elements $U_{k,l}=\lambda^{kl}/\sqrt{n}$ and $\lambda=e^{2i\pi/n}$. The exchange operator $\hat S$ is generalized to the cyclic-shift operator $\hat P$, defined by
\begin{equation}
    \hat a_j^\dagger(\omega) \mapsto \hat P\hat a_j^\dagger(\omega)\hat P^\dagger=\hat a_{j-1}^\dagger(\omega)
\end{equation}
where the index is taken modulo $n$. Denoting by $m_j$ the number of photons detected in arm $j$, we obtain the following generalization of Eq.~\eqref{eq: p even}:
\begin{equation}\label{eq: probability Fourier}
    \frac{1}{n} \sum_{l=0}^{n-1} \expval{\hat P^l}
    = \mathbb{P}\left( \sum_{k=0}^{n-1} k m_k \equiv 0\!\!\mod n \right).
\end{equation}
This relation expresses the probability of obtaining a given class of output configurations in terms of expectation values of powers of the symmetry operator $\hat P$, and hence in terms of the symmetry properties of the input state.

As in the two-mode case, this interferometer can be used to estimate an unknown parameter. We showed that, around $\theta=0$, the Fisher information is
\begin{equation}\label{eq: fourier fisher sym}
    \mathcal F
    = \frac{4}{n^2} \Delta^2 \Big( n \hat H - \sum_{l=0}^{n-1} \hat P^l \hat H \hat P^{-l} \Big),
\end{equation}
for a symmetric probe state, $\hat P\ket{\psi}=\ket{\psi}$. If the probe state is antisymmetric, $\hat P\ket{\psi}=-\ket{\psi}$,\footnote{This is possible only if $n$ is even.} then
\begin{equation}\label{eq: fourier fisher anti-sym}
    \mathcal F
    = \frac{4}{n^2} \Delta^2 \Big( \sum_{l=0}^{n-1} (-1)^l \hat P^l \hat H \hat P^{-l} \Big).
\end{equation}
These formulas have the same interpretation as their two-mode counterpart: they provide explicit expressions for a broad class of interferometric measurements and show that the symmetry of the evolution generator determines their optimality.

Although these results apply to any value of $n$, experimental challenges, such as the accurate implementation of the interferometer, photon-number-resolved detection, and optical losses, limit their practical use in large interferometers. Nevertheless, the case $n=3$, corresponding to a tritter architecture, has already been studied extensively and provides useful insight, as developed in Sec.~\ref{sec: tritter}.

\subsection{Illustrative example}
\label{subsec: illustrative example}
As an illustration of the symmetry approach, consider the normalized two-photon input state
\begin{equation}
    \ket{\psi}=\frac{1}{N}\left(\alpha \hat a_{0,H}^{\dagger}+\beta \hat a_{1,V}^{\dagger}\right)\left(\gamma \hat a_{0,H}^{\dagger}+\delta \hat a_{1,V}^{\dagger}\right)\vac,
\end{equation}
where $\alpha,\beta,\gamma,\delta\in\mathbb{C}$ are arbitrary coefficients and $N$ is the normalization constant. Although this state is not intended as a realistic experimental resource, it provides a simple example involving a coherent superposition of the $(2,0)$, $(1,1)$, and $(0,2)$ photon-number sectors. A direct calculation of the coincidence probability is straightforward but produces cumbersome algebra. The symmetry approach gives the result directly.

As mentioned above, the coincidence probability can be written as
\begin{equation}
    P_\text{c}=\frac12\left(1-\bra{\psi}\hat{S}\ket{\psi}\right).
\end{equation}
Acting on $\ket{\psi}$, the symmetry operator yields
\begin{equation}
    \hat S\ket{\psi}=\frac{1}{N}\left(\alpha \hat a_{1,H}^{\dagger}+\beta \hat a_{0,V}^{\dagger}\right)\left(\gamma \hat a_{1,H}^{\dagger}+\delta \hat a_{0,V}^{\dagger}\right)\vac.
\end{equation}
Because of their polarization content, the states $\ket{\psi}$ and $\hat S\ket{\psi}$ occupy orthogonal Fock subspaces. Therefore,
\begin{equation}
    \bra{\psi}\hat{S}\ket{\psi}=0.
\end{equation}
Consequently,
\begin{equation}
    P_{\text{c}}=\frac12,
\end{equation}
independently of the coefficients $\alpha$, $\beta$, $\gamma$, and $\delta$. Despite its limited direct physical relevance, this example illustrates how the exchange-overlap formulation yields the coincidence probability without explicitly propagating every term through the beam splitter. It also provides a simple diagnostic that can be applied before undertaking a complete calculation in more complex situations.

\section{The two-photon Mach-Zehnder interferometer}
\label{sec: MZI}

As a first illustration, we consider a balanced Mach-Zehnder interferometer probed by a time-frequency single-photon pair, with coincidence detection at the output ports, as shown in Fig.~\ref{fig: MZI single photon}. Mach-Zehnder interferometry is a central architecture in optical metrology \cite{holland_interferometric_1993,dowling_quantum_2008,pezze_mach-zehnder_2008}. The spectral response of this two-photon configuration was studied in detail in Ref.~\cite{fabre_interferometric_2022}, where it was shown that the exchange symmetry of the joint spectral amplitude determines whether the interferogram probes the sum- or difference-frequency distribution. The analysis was also extended there to simultaneous time and frequency displacements and related time-frequency phase-space representations. Here, instead of deriving the output probabilities by propagating the state through the complete device, we reinterpret the setup as a generalized HOM interferometer. This provides a direct symmetry-based derivation of the known spectral response and extends the analysis to an arbitrary evolution generator and to the Fisher information associated with photon counting. A preliminary version of part of this symmetry-based treatment of the two-photon Mach-Zehnder interferometer was presented in Ref.~\cite{descamps_phd_2026}.

\begin{figure*}[ht]
    \centering
    \scalebox{1.1}{\input{MZI_single_photon.tikz}}
    \caption{Schematic depiction of a Mach-Zehnder interferometer probed by a two-photon state. The unknown parameter $\theta$ is encoded through a unitary evolution $\hat V(\theta)$, which can be implemented through various physical mechanisms. The output photon-number distribution is measured, and the outcomes are used to estimate the parameter $\theta$.}
    \label{fig: MZI single photon}
\end{figure*}

\subsection{Interferometer setting}
\label{subsec: MZI setting}

The input probe is a pure two-mode two-photon state
\begin{equation}
    \ket{\psi_\text{in}}=\int \dd\omega_0\dd\omega_1\,F(\omega_0,\omega_1)\hat a_0^\dagger(\omega_0)\hat a_1^\dagger(\omega_1)\vac,
\end{equation}
where $\vac$ denotes the vacuum state and the joint spectral amplitude (JSA) $F$ satisfies
\begin{equation}
    \int \dd\omega_0\dd\omega_1\,\abs{F(\omega_0,\omega_1)}^2=1.
\end{equation}
A parameter $\theta$ is encoded through a general unitary evolution placed between the two beam splitters,
\begin{equation}
    \hat V(\theta)=e^{-i\theta\hat H},
\end{equation}
where $\hat H$ is an arbitrary Hermitian generator acting on the two spatial modes. Instead of analyzing the full Mach-Zehnder interferometer directly, we absorb the first beam splitter into the state preparation. The remaining interferometer then consists of the parameter encoding followed by a balanced beam splitter and coincidence detection. It can therefore be regarded as a generalized HOM interferometer fed by the effective two-photon input state
\begin{equation}
    \ket{\psi_\text{eff}}=\hat U\ket{\psi_\text{in}},
\end{equation}
where $\hat U$ denotes the balanced beam-splitter operation.

\subsection{General consequences of the symmetry framework}
\label{subsec: MZI symmetry}

In this section, we provide general results directly deduced using our formalism while the explicit expressions are delegated to Sec.~\ref{subsec: MZI explicit}. The symmetry properties of the effective state immediately lead to several important conclusions. Since the probe contains one photon in each input mode, it has odd parity with respect to $\hat\Pi_1$. The effective state entering the HOM stage therefore satisfies
\begin{equation}
    \hat S\ket{\psi_\text{eff}}=-\ket{\psi_\text{eff}},
\end{equation}
that is, it is antisymmetric under exchange of the two spatial modes. This immediately implies that the coincidence probability is maximal at $\theta=0$,
\begin{equation}
    P_\text{c}(0)=1,
\end{equation}
independently of the modal structure of the input state. This contrasts with the standard HOM effect, where perfect interference requires an input state that is symmetric under spatial exchange.

Furthermore, the Fisher information around $\theta=0$ takes the universal form
\begin{equation}
    \mathcal F=\Delta^2_{\ket{\psi_\text{eff}}}\left(\hat H-\hat S\hat H\hat S\right),
    \label{eq: MZI general FI}
\end{equation}
showing that only the component of the generator that is antisymmetric under spatial exchange contributes to the estimation precision. Introducing
\begin{equation}
    \hat H_\pm=\frac{1}{2}\left(\hat H\pm\hat S\hat H\hat S\right),
\end{equation}
we equivalently obtain
\begin{equation}
    \mathcal F=4\Delta^2_{\ket{\psi_\text{eff}}}\left(\hat H_-\right).
\end{equation}
Since $\ket{\psi_\text{eff}}$ is an eigenstate of $\hat S$, the symmetric and antisymmetric components of the generator have vanishing covariance.\footnote{The expectation value of an operator that is antisymmetric under $\hat S$ necessarily vanishes in an eigenstate of $\hat S$.} The quantum Fisher information can consequently be written as
\begin{equation}
    \mathcal Q=4\Delta^2_{\ket{\psi_\text{eff}}}\left(\hat H\right)=4\Delta^2_{\ket{\psi_\text{eff}}}\left(\hat H_+\right)+4\Delta^2_{\ket{\psi_\text{eff}}}\left(\hat H_-\right).
\end{equation}
Coincidence detection is therefore optimal if and only if
\begin{equation}
    \Delta^2_{\ket{\psi_\text{eff}}}\left(\hat H_+\right)=0,
    \label{eq: MZI optimality}
\end{equation}
namely when the exchange-symmetric component of the generator acts deterministically on the effective probe.

Finally, the Fisher information can be evaluated directly on the experimentally prepared state,
\begin{equation}
    \mathcal F=\Delta^2_{\ket{\psi_\text{in}}}\left(\hat U\hat H\hat U-\hat\Pi_1\hat U\hat H\hat U\hat\Pi_1\right),
\end{equation}
which avoids constructing the effective state explicitly. The spatial symmetry of $\hat H$, or equivalently the parity symmetry of $\hat U\hat H\hat U$, therefore determines both the information extracted by the measurement and whether coincidence detection is optimal.

\subsection{Explicit expressions}
\label{subsec: MZI explicit}

To obtain explicit analytical expressions, we consider the case of a single local delay generated by the operator \cite{fabre_time_2022}
\begin{equation}
    \hat H=\hat\omega_0=\int \dd\omega\,\omega\hat a_0^\dagger(\omega)\hat a_0(\omega).
\end{equation}
This configuration was analyzed spectrally in Ref.~\cite{fabre_interferometric_2022}. We recover its main result here from the spatial-symmetry framework. We first decompose the JSA into its symmetric and antisymmetric components,
\begin{align}
    F^s(\omega_0,\omega_1)&=\frac{F(\omega_0,\omega_1)+F(\omega_1,\omega_0)}{2},\\
    F^a(\omega_0,\omega_1)&=\frac{F(\omega_0,\omega_1)-F(\omega_1,\omega_0)}{2},
\end{align}
which satisfy
\begin{equation}
    \int \dd\omega_0\dd\omega_1\,\left(\abs{F^s(\omega_0,\omega_1)}^2+\abs{F^a(\omega_0,\omega_1)}^2\right)=1.
\end{equation}
We present the main results here, while the detailed computation is provided in App.~\ref{app: MZI}. The effective input state naturally decomposes into bunched and antibunched contributions,
\begin{equation}
    \ket{\psi_\text{eff}}=\ket{\psi^b}-\ket{\psi^a},
\end{equation}
where
\begin{align}
    \ket{\psi^b} =& \frac{1}{2} \int \dd\omega_0 \dd\omega_1\, F^s(\omega_0,\omega_1)\Big( \hat a_0^\dagger(\omega_0)\hat a_0^\dagger(\omega_1)\notag\\
    &\qquad - \hat a_1^\dagger(\omega_0)\hat a_1^\dagger(\omega_1)\Big)\vac,\\\ket{\psi^a} =& \int \dd\omega_0 \dd\omega_1\, F^a(\omega_0,\omega_1)\hat a_0^\dagger(\omega_0)\hat a_1^\dagger(\omega_1)\vac.
\end{align}

Although $F^s$ is symmetric under exchange of the frequency variables, both $\ket{\psi^b}$ and $\ket{\psi^a}$ are antisymmetric under exchange of the spatial modes, consistently with the general symmetry argument developed above. Applying the general expression for the coincidence probability then yields
\begin{align}
    P_\text{c}(\theta)&=\int \dd\omega_0\dd\omega_1\,\abs{F^s(\omega_0,\omega_1)}^2\cos^2\left(\frac{(\omega_0+\omega_1)\theta}{2}\right)\notag\\
    &\quad+\int \dd\omega_0\dd\omega_1\,\abs{F^a(\omega_0,\omega_1)}^2\cos^2\left(\frac{(\omega_0-\omega_1)\theta}{2}\right).
    \label{eq: MZI Pc}
\end{align}
Equation~\eqref{eq: MZI Pc} coincides with the ordinary MZI interferogram obtained in Ref.~\cite{fabre_interferometric_2022}. The two components probe complementary spectral variables: the symmetric component $F^s$ contributes through the sum frequency $\omega_0+\omega_1$, whereas the antisymmetric component $F^a$ contributes through the difference frequency $\omega_0-\omega_1$. From the present viewpoint, this difference follows directly from the spatial state produced by the first beam splitter. The symmetric spectral component produces a bunched superposition and therefore accumulates a two-photon sum phase, while the antisymmetric component remains distributed between the two arms and accumulates a relative phase. Unlike the standard HOM interferometer, which probes the difference-frequency coordinate \cite{descamps_time-frequency_2023,fabre_hongoumandel_2022}, the MZI may therefore access either direction of the joint spectral distribution depending on the exchange symmetry of the JSA.

The Fisher and quantum Fisher information are
\begin{align}
    \mathcal F&=\expval{(\hat\omega_0+\hat\omega_1)^2}_s+\expval{(\hat\omega_0-\hat\omega_1)^2}_a,
    \label{eq: MZI Fi}\\
    \mathcal Q &= 4\Delta^2_{\ket{\psi_\text{eff}}}(\hat \omega_0) = 2\expval{(\hat \omega_0+\hat \omega_1)^2}_s  + 4\expval{\hat \omega_0^2}_a\notag\\
    &\qquad  - \left(\expval{\hat \omega_0+\hat \omega_1}_s + 2\expval{\hat \omega_0}_a \right)^2,\label{eq: MZI Qfi}
\end{align}
where $\expval{\cdot}_s$ and $\expval{\cdot}_a$ denote expectation values evaluated on the auxiliary two-photon states
\begin{align}
    \ket{\psi_s}= \int \dd\omega_0 \dd\omega_1\,  F^s(\omega_0,\omega_1) \hat a_0^\dagger(\omega_0)\hat a_1^\dagger(\omega_1)\vac,  \\
    \ket{\psi_a}= \int \dd\omega_0 \dd\omega_1\,  F^a(\omega_0,\omega_1) \hat a_0^\dagger(\omega_0)\hat a_1^\dagger(\omega_1)\vac,
\end{align}
such that
\begin{equation}
    \expval{\hat A}_{s/a}=\bra{\psi_{s/a}}\hat A\ket{\psi_{s/a}}.
\end{equation}
These expressions form the starting point for the analysis of specific classes of JSAs, such as Gaussian states factorized in the variables $\omega_\pm=\omega_0\pm\omega_1$, for which the estimation performance can be evaluated analytically.

If the initial state $\ket{\psi_\text{in}}$ is generated by SPDC, the JSA can often be factorized in the sum- and difference-frequency variables $\omega_\pm$. If $f$ and $g$ are individually normalized, the Jacobian $\dd\omega_0\dd\omega_1=\dd\omega_+\dd\omega_-/2$ gives
\begin{equation}
    F(\omega_0,\omega_1)=\sqrt{2}f(\omega_+)g(\omega_-).
\end{equation}
This gives
\begin{align}
    F^{s/a}(\omega_0,\omega_1)=\sqrt{2}f(\omega_+)\frac{g(\omega_-)\pm g(-\omega_-)}{2}.
\end{align}
The exchange symmetry of the JSA is therefore entirely determined by the parity of the difference-frequency function $g$. If $g$ is even, the state is symmetric and the MZI probes the sum-frequency distribution $f$. If $g$ is odd, the state is antisymmetric and the MZI probes the difference-frequency distribution $g$. For a general function $g$, the coincidence probability contains both contributions,
\begin{align}
    P_\text{c}&=\int \dd\omega_+\,\abs{f(\omega_+)}^2 \cos^2(\tfrac{\omega_+\theta}{2})\int \dd\omega_-\,\abs{\tfrac{g(\omega_-)+g(-\omega_-)}{2}}^2\notag\\
    &\qquad+\int \dd\omega_-\,\abs{\tfrac{g(\omega_-)-g(-\omega_-)}{2}}^2\cos^2(\tfrac{\omega_-\theta}{2}).
\end{align}
This expression reveals explicitly which part of the JSA is probed by the interference. The parity of $g$ controls the relative weights of the two contributions. The symmetric sector probes the sum-frequency distribution $f$, whereas the antisymmetric sector probes the difference-frequency distribution $g$. Expressions for the Fisher and quantum Fisher information follow by substituting this ansatz into the formulas above. More general exchange phases, including anyonic spectral symmetries, were considered in Ref.~\cite{fabre_interferometric_2022}.

\subsection{Discussion}
\label{subsec: MZI discussion}

The dependence of the two-photon MZI interferogram on the sum- and difference-frequency variables was previously established in Ref.~\cite{fabre_interferometric_2022}, while related phase-space measurements have been developed using generalized HOM interferometers \cite{douce_direct_2013,boucher_toolbox_2015}. The present analysis provides a unified symmetry-based interpretation of these results. In contrast with our earlier treatment of generalized HOM interferometry \cite{descamps_time-frequency_2023}, which assumed one photon in each input mode, the formalism of Ref.~\cite{descamps_role_2026} accommodates the bunched components produced by the first beam splitter of the MZI. Absorbing this beam splitter into the state preparation therefore reduces the remaining device to a generalized HOM geometry and makes the coincidence statistics and metrological performance direct consequences of the spatial symmetry of the effective probe.

For a local time delay, this interpretation also clarifies why the symmetric and antisymmetric components of the joint spectral amplitude probe different spectral directions. The first beam splitter maps them onto, respectively, bunched and antibunched spatial components, which acquire phases governed by the sum and difference frequencies. Thus, unlike the standard two-photon HOM configuration considered here, whose interferogram probes the difference-frequency coordinate, the MZI can access either spectral direction depending on the exchange symmetry of the input state.

\section{Two parametric photon-pair sources}
\label{sec: two sources}

\subsection{Setup}
\label{subsec: two sources setup}

To study an input photon-number distribution that differs from the usual HOM configuration, we consider two distinct parametric photon-pair sources coherently pumped from the same laser reference, with each source feeding a different input port of a balanced beam splitter. The sources may rely on either SPDC or SFWM; both processes coherently create a photon pair and lead to the same two-alternative state considered here. We assume sufficiently low gain so that the state can be restricted to the emission of a single photon pair. For completeness, we use the standard JSA description and occasionally specialize the discussion to SPDC, a common source of time-frequency-entangled photon pairs whose JSA determines the spectral correlations and purity of the generated state \cite{grice_spectral_1997,law_continuous_2000,mosley_heralded_2008}. Denoting the respective JSAs of the two sources by $F_0$ and $F_1$, the generated state is
\begin{align}
    \ket{\psi_\text{in}}&=\frac{1}{2}\int \dd\omega_0\dd\omega_1\,F_0(\omega_0,\omega_1)\hat a_0^\dagger(\omega_0)\hat a_0^\dagger(\omega_1)\vac\notag\\
    &\qquad+\frac{1}{2}\int \dd\omega_0\dd\omega_1\,F_1(\omega_0,\omega_1)\hat a_1^\dagger(\omega_0)\hat a_1^\dagger(\omega_1)\vac.
\end{align}
This is a coherent superposition of the two alternatives in which the complete photon pair is generated by source $0$ or by source $1$. Since frequency is the only internal degree of freedom considered here and the two photons have the same polarization such that it can be omitted, the amplitudes are symmetric under exchange, $F_j(\omega_0,\omega_1)=F_j(\omega_1,\omega_0)$. The JSAs are not individually normalized but instead satisfy
\begin{equation}
    \int \dd\omega_0\dd\omega_1\,\left(\abs{F_0(\omega_0,\omega_1)}^2+\abs{F_1(\omega_0,\omega_1)}^2\right)=2.
\end{equation}
Their relative normalization accounts for a possible imbalance between the pair-generation probabilities of the two sources. We assume coincidence detection at the output of the balanced beam splitter. The setup is depicted in Fig.~\ref{fig: two spdc}.

This configuration is commonly associated with time-reversed Hong-Ou-Mandel interference or interference-facilitated photon-pair separation \cite{chen_deterministic_2007,marchildon_deterministic_2016}. In contrast to the standard HOM experiment, both photons enter through the same input port in each component of the superposition. Interference then occurs between the two pair-creation alternatives rather than between alternatives obtained by exchanging two photons initially entering through different ports. This mechanism was first demonstrated on integrated silicon platforms using SFWM \cite{silverstone_-chip_2014}, and has also been realized with SPDC in bulk or fiber systems and in integrated lithium-niobate circuits \cite{chen_deterministic_2007,jin_-chip_2014,chapmanOnChipQuantumInterference2025}.

\begin{figure}[ht]
    \centering
    \scalebox{0.84}{\input{Double_sources.tikz}}
    \caption{Schematic depiction of two distinct, coherently pumped parametric photon-pair sources. Each source feeds a different input port of a balanced beam splitter, and coincidence detection is performed at the output ports.}
    \label{fig: two spdc}
\end{figure}

\subsection{Coincidence probability}
\label{subsec: two sources coincidence}

Within our formalism, the coincidence probability is given by
\begin{equation}
    P_\text{c}=\frac{1}{2}\left(1-\bra{\psi_\text{in}}\hat S\ket{\psi_\text{in}}\right).
\end{equation}
The coincidence probability is therefore controlled by the spatial symmetry of the input state. Expanding the expectation value of $\hat S$ in terms of the two JSAs gives
\begin{equation}\label{eq: two sources symmetry overlap}
    \bra{\psi_\text{in}}\hat S\ket{\psi_\text{in}}=\Re\left[\int \dd\omega_0\dd\omega_1\,F_0^*(\omega_0,\omega_1)F_1(\omega_0,\omega_1)\right].
\end{equation}
The quantity appearing in Eq.~\eqref{eq: two sources symmetry overlap} is the overlap between the complete two-photon states generated by the two sources. Its role in time-reversed HOM interference is known from more general analyses of photon-pair separation, which account for unequal source spectra, polarization distinguishability, temporal delays, and dispersive couplers \cite{marchildon_deterministic_2016}. The symmetry approach provides a direct interpretation of this overlap: it is precisely the expectation value of the spatial exchange operator measured through output parity.

The relative weights of the two creation alternatives are an important experimental constraint. Let $p_j=\frac{1}{2}\int\dd\omega_0\dd\omega_1\abs{F_j(\omega_0,\omega_1)}^2$, so that $p_0+p_1=1$. Even for otherwise identical pair states, the magnitude of the interference term is bounded by $2\sqrt{p_0p_1}$ and reaches unity only for $p_0=p_1=1/2$. Unequal generation probabilities therefore produce a purely classical reduction of the maximum visibility. In practice, differences in coupling and nonlinear conversion efficiency generally require the pump power delivered to each source to be adjusted empirically. This constraint differs from standard HOM interference between two independently generated single photons: after conditioning on trials containing one photon in each input, unequal source brightness primarily changes the coincidence rate rather than the normalized HOM visibility.

It is useful to contrast this result with the standard HOM configuration. When one photon enters each input port, spatial exchange simultaneously exchanges the internal variables associated with the two photons, and the coincidence probability consequently probes the exchange symmetry of their joint amplitude. In the present configuration, spatial exchange instead maps the alternative in which the entire pair is emitted by source $0$ onto the alternative in which it is emitted by source $1$. The measurement therefore probes the distinguishability of the two pair-production processes and is independent of the exchange symmetry of either JSA considered separately. This provides an example of interference governed by indistinguishable multiphoton pathways rather than by the indistinguishability of the individual photons \cite{pittman_can_1996,kim_quantum_2005}.

The probabilities of the explicit photon-number configurations $\P[m_0=2,m_1=0]$, $\P[m_0=0,m_1=2]$, and $\P[m_0=1,m_1=1]$ are
\begin{align}\label{eq: two sources distribution}
    \P[m_0=2,m_1=0]&=\P[m_0=0,m_1=2]\notag\\
    &=\frac{1}{4}\left(1+\bra{\psi_\text{in}}\hat S\ket{\psi_\text{in}}\right).
\end{align}
Thus, in this two-photon case, knowledge of the average spatial symmetry is sufficient to determine the entire output photon-number distribution. Perfectly symmetric and antisymmetric input states respectively produce deterministic pair separation and complete suppression of coincidences, up to the convention chosen for the relative beam-splitter phases. These behaviors are the two complementary limits of time-reversed HOM interference \cite{chen_deterministic_2007,marchildon_deterministic_2016}. Our formalism extends this observation to the following general class of NOON-like states.

\begin{thm}
    Consider the balanced beam splitter defined by Eq.~\eqref{eq: bs} and the normalized input state
    \begin{align}
        \ket{\psi}&=\int \!\dd\omega_1\cdots\dd\omega_nF(\omega_1,\dots,\omega_n)\hat a_0^\dagger(\omega_1)\cdots\hat a_0^\dagger(\omega_n)\vac\notag\\
        &\;+\!\int \!\dd\omega_1\cdots\dd\omega_nG(\omega_1,\dots,\omega_n)\hat a_1^\dagger(\omega_1)\cdots\hat a_1^\dagger(\omega_n)\vac,
    \end{align}
    composed of $n$ photons that occupy the same arm in each term. The photon-number probabilities are
    \begin{equation}
        \P[m_0=k,m_1=n-k]=\frac{\binom{n}{k}}{2^n}\left[1+(-1)^{n-k}\bra{\psi}\hat S\ket{\psi}\right].
    \end{equation}
    This theorem is derived in App.~\ref{app: full distri n noon}.
\end{thm}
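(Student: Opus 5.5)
Our plan is to propagate each of the two branches of $\ket{\psi}$ through the beam splitter directly, and then isolate the interference between them using the exchange operator $\hat S$. We write $\ket{\psi}=\ket{\psi_F}+\ket{\psi_G}$, where $\ket{\psi_F}$ has all $n$ photons in mode $0$ and $\ket{\psi_G}$ has all $n$ photons in mode $1$. Since the symmetric part of $F$ is all that contributes, we may assume without loss of generality that $F$ and $G$ are symmetric in their arguments.

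Using Eq.~\eqref{eq: bs}, we have $\hat U\hat a_0^\dagger(\omega)\hat U^\dagger=(\hat a_0^\dagger(\omega)+\hat a_1^\dagger(\omega))/\sqrt2$ and $\hat U\hat a_1^\dagger(\omega)\hat U^\dagger=(\hat a_0^\dagger(\omega)-\hat a_1^\dagger(\omega))/\sqrt2$. We expand the product $\prod_i(\hat a_0^\dagger(\omega_i)\pm\hat a_1^\dagger(\omega_i))$ over subsets $A\subseteq\{1,\dots,n\}$ of photons sent to output $0$. This gives
\begin{equation}
\hat U\ket{\psi}=2^{-n/2}\sum_{A}\int (F+(-1)^{n-|A|}G)\prod_{i\in A}\hat a_0^\dagger(\omega_i)\prod_{i\notin A}\hat a_1^\dagger(\omega_i)\vac .
\end{equation}
We then project onto the sector $m_0=k$. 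The $\binom{n}{k}$ subsets with $|A|=k$ all give the same vector after relabelling integration variables, by symmetry of $F$ and $G$. Hence the projected state is $2^{-n/2}\binom{n}{k}\ket{\chi_k}$, where $\ket{\chi_k}$ is the state with amplitude $F+(-1)^{n-k}G$, with the first $k$ arguments placed in mode $0$ and the remaining ones in mode $1$.

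The main step is computing $\|\ket{\chi_k}\|^2$. For a state $\int H\,\hat a_0^\dagger(\omega_1)\cdots\hat a_0^\dagger(\omega_k)\hat a_1^\dagger(\omega_{k+1})\cdots\vac$ with $H$ symmetric within each group, the norm is $k!(n-k)!\int|H|^2$. This follows from Wick contractions, which permute within each group. The same rule with $k=n$ gives $\langle\psi_F|\psi_F\rangle=n!\int|F|^2$ and similarly for $G$. Moreover, $\hat S\ket{\psi_F}$ has amplitude $F$ in mode $1$, so
\begin{equation}
\bra{\psi}\hat S\ket{\psi}=2n!\,\Re\int F^*G ,
\end{equation}
and normalization reads $n!\int(|F|^2+|G|^2)=1$. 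Combining,
\begin{equation}
\P[m_0=k]=2^{-n}\binom{n}{k}^2\frac{k!(n-k)!}{n!}\,n!\int|F+(-1)^{n-k}G|^2 .
\end{equation}
Since $\binom{n}{k}k!(n-k)!=n!$, this equals $2^{-n}\binom{n}{k}\left[1+(-1)^{n-k}\bra{\psi}\hat S\ket{\psi}\right]$.

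The obstacle we expect is the careful bookkeeping of the combinatorial factors. A consistency check is that summing over $k$ gives $1$, because $\sum_k\binom{n}{k}(-1)^{n-k}=0$. Should the Wick-contraction accounting prove error-prone, we would instead work with symmetrised amplitudes written in first-quantised form, where the norms are plain $L^2$ norms.
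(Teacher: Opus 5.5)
Your proposal is correct and follows essentially the same route as the paper's proof. You take $F$ and $G$ symmetric, expand $\prod_i\bigl(\hat a_0^\dagger(\omega_i)\pm\hat a_1^\dagger(\omega_i)\bigr)$, collapse the $\binom{n}{k}$ equivalent terms by relabelling, evaluate the norm of the $m_0=k$ component with the factor $k!(n-k)!$, and conclude using $\binom{n}{k}k!(n-k)!=n!$ together with $\braket{\psi}=n!\int(\abs{F}^2+\abs{G}^2)$ and $\bra{\psi}\hat S\ket{\psi}=2n!\,\Re\int F^*G$, exactly as the paper does.
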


\subsection{Metrology}
\label{subsec: two sources metrology}

The preceding interference mechanism is well established as a method for separating, routing, and characterizing photon pairs \cite{chen_deterministic_2007,silverstone_-chip_2014,jin_-chip_2014,marchildon_deterministic_2016}. We now use the symmetry formulation to examine its metrological interpretation. We add an evolution stage before the balanced beam splitter, $\hat V(\theta)=e^{-i\theta\hat H}$, where $\hat H$ is an arbitrary Hermitian operator; see Fig.~\ref{fig: two spdc metro}.

\begin{figure}[ht]
    \centering
    \scalebox{0.67}{\input{Double_sources_metro.tikz}}
    \caption{Schematic depiction of two distinct, coherently pumped parametric photon-pair sources. An evolution $\hat V(\theta)$ encodes the unknown parameter $\theta$ before the two source amplitudes interfere at the balanced beam splitter.}
    \label{fig: two spdc metro}
\end{figure}

A first requirement for applying the metrological result recalled in Sec.~\ref{sec: framework} is that the initial state $\ket{\psi_\text{in}}$ be perfectly symmetric or antisymmetric under spatial exchange. As detailed in App.~\ref{app: sym condition for NOON}, this requires
\begin{equation}
    F_1(\omega_0,\omega_1)=\pm F_0(\omega_0,\omega_1).
\end{equation}
In practice, this condition requires identical states to be produced by the two sources, together with precise control of their relative amplitude and phase through the pump fields. We consider the symmetric case $F_0=F_1$.

The general symmetry result then gives the Fisher information associated with the coincidence measurement as
\begin{equation}
    \mathcal F=\Delta^2(\hat H-\hat S\hat H\hat S).
\end{equation}
To obtain a more explicit expression, we assume that the evolution acts locally on mode $0$ and is generated by
\begin{equation}
    \hat H=\hat H_l\otimes \1,
\end{equation}
where the index `$l$' stands for local. This describes common situations in which the unknown parameter is encoded through a phase shift or a time delay in one arm. Assuming further that $\hat H_l\vac=0$, we obtain
\begin{equation}\label{eq: FI two sources}
    \mathcal F=\Delta^2(\hat H-\hat S\hat H\hat S)=\bra{\varphi}\hat H_l^2\ket{\varphi},
\end{equation}
where
\begin{equation}
    \ket{\varphi}=\frac{1}{\sqrt{2}}\int \dd\omega_0\dd\omega_1\,F_0(\omega_0,\omega_1)\hat a^\dagger(\omega_0)\hat a^\dagger(\omega_1)\vac
\end{equation}
is the normalized two-photon state produced by either source in a single spatial mode. The quantum Fisher information is
\begin{equation}\label{eq: QFI two sources}
    \mathcal Q=4\Delta_{\ket{\psi_\text{in}}}^2\hat H=2\bra{\varphi}\hat H_l^2\ket{\varphi}-\bra{\varphi}\hat H_l\ket{\varphi}^2=\mathcal F+\Delta_{\ket{\varphi}}^2\hat H_l.
\end{equation}
The derivation is presented in App.~\ref{app: computation two sources metro}. The Fisher information is bounded by the quantum Fisher information, with equality if and only if $\Delta_{\ket{\varphi}}^2\hat H_l=0$, namely if $\ket{\varphi}$ is an eigenstate of $\hat H_l$.\footnote{By the Cauchy-Schwartz inequality, for a normalized state, the variance of a Hermitian operator vanishes if and only if the state is an eigenstate of that operator.} In this limit, the evolution reduces to
\begin{equation*}
    \ket{\psi(\theta)}=\frac{1}{\sqrt{2}}\big(e^{-i\theta\lambda}\ket{\varphi}\otimes\vac+\vac\otimes\ket{\varphi}\big).
\end{equation*}
This situation is analogous to the metrological use of a NOON state: the JSA of $\ket{\psi}$ becomes largely irrelevant and only fixes the value of $\lambda$. In all other cases, the Fisher information is strictly smaller than the quantum Fisher information, and coincidence detection is not optimal for this metrological protocol.

\subsection{Time-delay estimation}
\label{subsec: two sources time evolution}

To make the role of the spectral distribution explicit, we consider a common time delay applied to both photons in mode $0$. Its generator is
\begin{equation}
    \hat H=\hat \omega_0=\int \dd\omega\,\omega\hat a_0^\dagger(\omega)\hat a_0(\omega).
\end{equation}
The delayed pair amplitude acquires the phase $e^{-i\theta(\omega_0+\omega_1)}$. Consequently, the coincidence probability becomes
\begin{equation}
    \label{eq: two sources time probability}
    P_\text{c}(\theta)=\frac{1}{2}\left[1-\!\!\int \dd\omega_0\dd\omega_1\,\abs{F_0(\omega_0,\omega_1)}^2\!\cos\left(\theta(\omega_0+\omega_1)\right)\right].
\end{equation}
The occurrence of the sum frequency in time-reversed HOM interference has previously been obtained by directly propagating the two pair amplitudes through the interferometer \cite{marchildon_deterministic_2016}. Equation~\eqref{eq: two sources time probability} shows that it also follows immediately from the spatial-symmetry measurement: the delay modifies the overlap between the complete pair states produced in the two arms.

The Fisher information around $\theta=0$ and the quantum Fisher information are
\begin{align}
    \mathcal F&=\int \dd\omega_0 \dd\omega_1\,\abs{F_0(\omega_0,\omega_1)}^2(\omega_0+\omega_1)^2,\\
    \mathcal Q&=2\int \dd\omega_0 \dd\omega_1\,\abs{F_0(\omega_0,\omega_1)}^2(\omega_0+\omega_1)^2\notag\\
    &\qquad-\left(\int \dd\omega_0 \dd\omega_1\,\abs{F_0(\omega_0,\omega_1)}^2(\omega_0+\omega_1)\right)^2.
\end{align}
The precision is therefore governed by the second moment of the sum-frequency distribution. For SPDC, this spectral direction is primarily determined by the pump envelope, while the phase-matching function mainly constrains the complementary difference-frequency direction. A broader pump spectrum can therefore increase the Fisher information as it is often the case in such optical settings. The carrier-frequency contribution must, however, be interpreted with care: it is physically accessible only when the phase coherence between the two source amplitudes is maintained relative to the same pump reference. Thus the relative phase of the two pump paths, and hence the interferometer, must be stable on the scale required to resolve the sum-frequency fringes. This is a central experimental difference from standard HOM interference, whose dip does not require first-order phase stabilization between the two input arms.

There is also a marked difference in the temporal extent of the interferogram. In standard HOM interference, the envelope is set by the single-photon wave-packet duration, or equivalently by the relevant difference-frequency bandwidth. Here, the delay scans the sum-frequency phase, so that the fringe envelope is the Fourier transform of the sum-frequency distribution and is therefore governed mainly by the pump coherence time. Under continuous-wave pumping this coherence time can be very long: rapid oscillations at approximately the pump frequency may consequently persist over delays far exceeding the individual-photon coherence time, provided that the relative pump phase remains stable.

The equality $\mathcal F=\mathcal Q$ requires the pair state to have a sharply defined total frequency. This is approached in the monochromatic-pump limit, for which energy conservation confines the JSA to $\omega_0+\omega_1=\omega_p$. The optimality condition is therefore associated with perfect anticorrelation of the two photon frequencies: individual frequencies may fluctuate, while their sum remains fixed. In that limit, the delay produces only a relative phase between the two source alternatives, and coincidence detection extracts all the available information.

\subsection{Type-II conversion}
\label{subsec: two sources type II}

So far, we have assumed that the two photons in each arm have the same polarization, as in type-0 or type-I down-conversion, so that polarization could be omitted. The analysis extends directly to orthogonally polarized photons produced by type-II conversion. In this case, the two-source state becomes
\begin{align}
    \ket{\psi_\text{in}}&=\frac{1}{\sqrt{2}}\int \dd\omega_0 \dd\omega_1\,F_0(\omega_0,\omega_1)\hat a_{0,H}^\dagger(\omega_0)\hat a_{0,V}^\dagger(\omega_1)\vac \notag\\
    &\;+\frac{1}{\sqrt{2}}\int \dd\omega_0 \dd\omega_1\,F_1(\omega_0,\omega_1)\hat a_{1,H}^\dagger(\omega_0)\hat a_{1,V}^\dagger(\omega_1)\vac.
\end{align}
A common time delay applied to both polarizations in mode $0$ is generated by
\begin{align}
    \hat H&=\hat\omega_{0,H}+\hat\omega_{0,V}\notag\\
    &=\int \dd\omega\,\omega\left[\hat a_{0,H}^\dagger(\omega)\hat a_{0,H}(\omega)+\hat a_{0,V}^\dagger(\omega)\hat a_{0,V}(\omega)\right].
\end{align}
This leads to the same expressions for $\mathcal F$ and $\mathcal Q$, with the sum-frequency marginal of the joint spectral intensity determining the estimation precision. In this case, the JSA need not be symmetric under the exchange $\omega_0\leftrightarrow\omega_1$, because the two frequency variables are associated with distinguishable polarizations. Nevertheless, the spatial symmetry of the complete two-source state remains determined by the overlap and relative phase of the states generated by the two sources. The probe may therefore be perfectly symmetric or antisymmetric under spatial exchange even when the JSA of either individual type-II source is not symmetric under frequency exchange. This independence between internal pair symmetry and indistinguishability of the pair-production alternatives is consistent with general treatments of time-reversed HOM interference involving distinguishable photons \cite{marchildon_deterministic_2016,chen_polarization_2018}.

\section{Parity and generalized HOM effect}
\label{sec: parity and gen HOM}
Generalizations of HOM interference have long been studied and reveal that a single balanced beam splitter supports rich multiphoton phenomena \cite{ou_observation_1988,campos_quantum-mechanical_1989,lim_generalized_2005}. Early work by Ou~\cite{ou_quantum_1996} analyzed the output statistics of a general two-mode input state
\begin{equation}
    \ket{\psi}=\sum_{n_a,n_b=0}^\infty c_{n_a,n_b}\ket{n_a,n_b},
\end{equation}
and showed that interference effects extending the original HOM mechanism arise well beyond the case of two single photons in indistinguishable modes. In particular, destructive interference can suppress specific output photon-number configurations even when only one input contains a single photon while the other is prepared in an arbitrary quantum state.

More recently, the concept of \emph{extended} or generalized HOM interference has been developed~\cite{alsing_extending_2022,alsing_hong-ou-mandel_2024,alsing_examination_2025}. These works demonstrate that the photon-number parity of one input state plays a decisive role in determining the output statistics of a balanced beam splitter, independently of the second input. For example, if one input has odd photon-number parity, such as an odd Fock state $\ket{n}$, the joint photon-number distribution exhibits a central nodal line along the main diagonal, corresponding to the complete suppression of events in which both output ports contain the same number of photons. Explicitly, consider an input state of the form
\begin{equation}
    \ket{\psi_\text{in}}=\ket{\psi_g}\otimes\ket{\psi_\text{odd}},
\end{equation}
where $\ket{\psi_g}=\sum_{n=0}^\infty c_n\ket{n}$ is an arbitrary single-mode state and $\ket{\psi_\text{odd}}=\sum_{n=0}^\infty d_n\ket{2n+1}$ contains only odd photon-number components. In this case, the probability of measuring the same number of photons at both outputs of the beam splitter vanishes:
\begin{equation}
    \P(m_0=m_1)=0.
\end{equation}
The familiar HOM dip obtained from the input state $\ket{1,1}$ thus appears as the simplest instance of a much broader class of multiphoton interference phenomena. More generally, arbitrary Fock-state inputs $\ket{n,m}$ give rise to intricate interference landscapes featuring both exact nodal lines and pseudonodal curves, where the detection probabilities are strongly, although not completely, suppressed. Remarkably, signatures of this generalized HOM interference persist even when a single photon interferes with a classical state, such as a coherent or thermal field, emphasizing the fundamental role played by photon-number parity \cite{birrittella_parity_2021}.

The symmetry-parity correspondence of Eq.~\eqref{eq: sym parity} used in this work provides a simple and physically transparent explanation of these results, which were previously obtained through considerably more involved calculations. Since the second mode has odd parity,
\begin{equation}
    \bra{\psi_\text{in}}\hat\Pi_1\ket{\psi_\text{in}}=-1.
\end{equation}
The symmetry-parity correspondence shows that the beam splitter transforms the input into an antisymmetric state,
\begin{equation}
    \bra{\psi_\text{out}}\hat S\ket{\psi_\text{out}}=\bra{\psi_\text{in}}\hat U^\dagger \hat S \hat U\ket{\psi_\text{in}}=\bra{\psi_\text{in}}\hat\Pi_1\ket{\psi_\text{in}}=-1,
\end{equation}
where $\ket{\psi_\text{out}}=\hat U\ket{\psi_\text{in}}$. Expanding the output state as
\begin{equation}
    \ket{\psi_\text{out}}=\sum_{m_0,m_1}\alpha_{m_0,m_1}\ket{m_0,m_1},
\end{equation}
antisymmetry immediately imposes
\begin{equation}
    \alpha_{m_0,m_1}=-\alpha_{m_1,m_0},
\end{equation}
which in particular implies $\alpha_{m,m}=0$ for every $m$. Consequently,
\begin{equation}
    \P(m_0=m_1)=0,
\end{equation}
recovering the central nodal line characteristic of generalized HOM interference without requiring the combinatorial expansions used in previous derivations. The result follows solely from the symmetry of the output state, making the physical origin of the interference immediately apparent.

This argument assumes that no internal degrees of freedom are taken into account. When they are included, antisymmetry need only hold for the complete wavefunction. Consequently, amplitudes associated with equal output photon numbers may remain nonzero provided that they are antisymmetric functions of the internal variables, so suppression of equal-photon-number events is then no longer guaranteed.

\section{Three-mode interferometer}
\label{sec: tritter}

Beyond two spatial modes, the next simplest balanced interferometer is the three-mode tritter. It can be decomposed into beam splitters and phase shifters \cite{reck_experimental_1994} or implemented by coupling three waveguides in a three-dimensional geometry \cite{spagnolo_three-photon_2013}. Multiparticle interference in such networks is generally governed by the permutation structure of the interfering paths and by the modal distinguishability of the input photons \cite{tichy_interference_2014,tichy_many-particle_2012,shchesnovich_partial_2015}. The tritter operator $\hat U$ transforms the input creation operators according to
\begin{equation}\label{eq: tritter}
    \begin{pmatrix}
        \hat a_0^\dagger\\
        \hat a_1^\dagger\\
        \hat a_2^\dagger
    \end{pmatrix}
    \longmapsto \frac{1}{\sqrt{3}}
    \begin{pmatrix}
        1 & 1 & 1\\
        1 & \lambda & \lambda^2\\
        1 & \lambda^2 & \lambda
    \end{pmatrix} \begin{pmatrix}
        \hat a_0^\dagger\\
        \hat a_1^\dagger\\
        \hat a_2^\dagger
    \end{pmatrix},
    \qquad \lambda=e^{2i\pi/3}.
\end{equation}
It is thus the three-mode instance of the discrete Fourier interferometer introduced in Sec.~\ref{sec: framework}. Throughout this section, we assume that the input state contains three photons, one in each spatial mode.

Three-photon interference at a tritter has already been studied theoretically and experimentally. In particular, the suppression of the $(2,1,0)$ output configurations and their permutations, as well as the role of partial distinguishability, are established results \cite{spagnolo_three-photon_2013,menssen_distinguishability_2017}. Our purpose is not to rederive the general theory of tritter interference, but to show how these results follow directly from the spatial-symmetry viewpoint and to identify the quantity measured by the modular photon-counting observable introduced in Sec.~\ref{sec: framework}.

\subsection{Output probabilities}
\label{subsec: tritter output probabilities}

For three modes, the cyclic permutation operator is defined by
\begin{equation}\label{eq: tritter cyclic permutation}
    \hat P\hat a_j^\dagger(\omega)\hat P^\dagger =\hat a_{j-1}^\dagger(\omega),
\end{equation}
where the spatial index is taken modulo three. Denoting by $m_j$ the number of photons measured in output mode $j$, Eq.~\eqref{eq: probability Fourier} becomes
\begin{equation}\label{eq: tritter modular probability}
    \P(m_1+2m_2\equiv0\!\!\mod 3) =\frac{1}{3}\left(1+2\operatorname{Re}\expval{\hat P}\right).
\end{equation}
For a total number of three photons, the condition
$m_1+2m_2\equiv0\!\!\mod 3$ is satisfied only by the output configurations
$(1,1,1)$, $(3,0,0)$, $(0,3,0)$, and $(0,0,3)$. Introducing the notation $P_{\{300\}}=P_{300}+P_{030}+P_{003}$, we obtain
\begin{equation}\label{eq: tritter symmetry probability}
    P_{111}+P_{\{300\}}
    =\frac{1}{3}\left(1+2\operatorname{Re}\expval{\hat P}\right).
\end{equation}
Equivalently, if $P_{\{210\}}$ denotes the probability summed over the six permutations of $(2,1,0)$,
\begin{equation}\label{eq: tritter 210 probability}
    P_{\{210\}}
    =\frac{2}{3}\left(1-\operatorname{Re}\expval{\hat P}\right).
\end{equation}

These formulas provide the direct generalization of the symmetry-parity correspondence encountered at a beam splitter. In the two-mode case, the expectation value of the exchange operator determines the parity of one output mode. In the three-mode case, the expectation value of the cyclic permutation determines the output photon-number distribution modulo three. In particular, for a cyclically symmetric state, $\hat P\ket{\psi}=\ket{\psi}$, we obtain $P_{\{210\}}=0$. We thus recover the suppression law associated with three photons in indistinguishable modes at a balanced tritter
\cite{spagnolo_three-photon_2013,tichy_zero-transmission_2010}.

More generally, as shown in Ref.~\cite{descamps_role_2026}, the three possible values of the modular quantity $q\equiv m_1+2m_2\!\!\mod 3$ correspond to projections onto the three eigenspaces of $\hat P$. The associated projectors are $\hat\Pi_q=\frac{1}{3}\sum_{k=0}^{2}\lambda^{-qk}\hat P^k$, and hence
\begin{equation}
    \P(m_1+2m_2\equiv q\!\!\mod 3) =\bra{\psi}\hat\Pi_q\ket{\psi}.
\end{equation}
The six permutations of the output configuration $(2,1,0)$ are consequently separated into two groups:
\begin{align}
    q=1 &: \quad (2,1,0),\ (0,2,1),\ (1,0,2),\\
    q=2 &: \quad (2,0,1),\ (1,2,0),\ (0,1,2).
\end{align}
We therefore obtain
\begin{align}
    P_{2,1,0}+P_{1,0,2}+P_{0,2,1}=\frac{1}{3}\left(1+2\Re\expval{\lambda^2 \hat P}\right),\\
    P_{2,0,1}+P_{1,2,0}+P_{0,1,2}=\frac{1}{3}\left(1+2\Re\expval{\lambda \hat P}\right).
\end{align}

\subsection{Independent photons}
\label{subsec: tritter indep photons}

We first consider three photons described by pure and independent internal states $\ket{\phi_j}$. This configuration and its dependence on the pairwise overlaps have already been extensively studied in the context of three-photon interference. In particular, Ref.~\cite{menssen_distinguishability_2017} showed theoretically and experimentally that pairwise distinguishabilities are not sufficient to characterize the interference and identified the additional collective quantity known as the triad phase. We briefly recover these established results in our notation before relating them to the cyclic-symmetry measurement introduced above. The expectation value of the cyclic permutation is then
\begin{equation}\label{eq: tritter pure cyclic overlap}
    \expval{\hat P}
    =\braket{\phi_0}{\phi_1}
     \braket{\phi_1}{\phi_2}
     \braket{\phi_2}{\phi_0}.
\end{equation}
Writing the pairwise overlaps as
\begin{equation}
    \braket{\phi_j}{\phi_k}=r_{jk}e^{i\varphi_{jk}},
\end{equation}
we recover
\begin{equation}
    \expval{\hat P}=r_{01}r_{12}r_{20}e^{i\Phi},\qquad \Phi=\varphi_{01}+\varphi_{12}+\varphi_{20},
    \label{eq: tritter triad symmetry}
\end{equation}
where $\Phi$ is precisely the triad phase introduced and measured in Ref.~\cite{menssen_distinguishability_2017}. Thus, neither the product of overlaps nor its phase constitutes a new distinguishability quantity in the present work. Our symmetry framework instead identifies this already-known three-photon invariant with the expectation value of the cyclic spatial permutation measured by the Fourier interferometer.

Substituting Eq.~\eqref{eq: tritter triad symmetry} into Eq.~\eqref{eq: tritter modular probability} gives
\begin{equation}
    P_{111}+P_{\{300\}}
    =\frac{1}{3}\left(1+2r_{01}r_{12}r_{20}\cos\Phi\right).
\end{equation}
This expression is consistent with the established tritter output probabilities \cite{menssen_distinguishability_2017}: summing the probabilities belonging to the same modular sector cancels their separate dependence on the pairwise indistinguishabilities and retains only the real part of the collective three-photon overlap.

The other two modular probabilities are obtained in the same way. Using
$\expval{\hat P}=r_{01}r_{12}r_{20}e^{i\Phi}$, we find
\begin{align}
    \P(m_1+2&m_2\equiv1\!\!\mod 3)\notag\\
    &=\frac{1}{3}\left(1+2\operatorname{Re}\left[\lambda^2\expval{\hat P}\right]\right)\notag\\
    &=\frac{1}{3}\left[1+2r_{01}r_{12}r_{20}\cos\left(\Phi-\tfrac{2\pi}{3}\right)\right],\label{eq: tritter q1 pure}\\
    \P(m_1+2&m_2\equiv2\!\!\mod 3)\notag\\
    &=\frac{1}{3}\left(1+2\operatorname{Re}\left[\lambda\expval{\hat P}\right]\right)\notag\\
    &=\frac{1}{3}\left[1+2r_{01}r_{12}r_{20}\cos\left(\Phi+\tfrac{2\pi}{3}\right)\right].\label{eq: tritter q2 pure}
\end{align}
Our approach therefore directly provides the distribution of the three modular output classes without requiring the individual probabilities of each photon-number configuration. In particular, the three probabilities correspond to three interference fringes shifted by $2\pi/3$ with respect to one another and depend only on the modulus and phase of the average cyclic symmetry.

The extension to independent mixed internal states is likewise closely related to the distinguishability invariants studied in Ref.~\cite{jones_distinguishability_2023}. That work showed that three-photon interference depends not only on the three pairwise quantities $\Tr(\rho_j\rho_k)$ but also on the generally complex triple trace of the internal density operators. With the explicit convention for $\hat P$ used here, its ordering is
\begin{equation}
    \expval{\hat P}=\Tr\left[\hat P\left(\rho_0\otimes\rho_1\otimes\rho_2\right)\right]=\Tr(\rho_0\rho_1\rho_2).
\end{equation}
The modular probabilities are therefore
\begin{align}
    \P(m_1+2&m_2\equiv q\!\!\mod 3)\notag\\
    &=\frac{1}{3}\left(1+2\Re\Tr(\lambda^{-q}\rho_0\rho_1\rho_2)\right).
    \label{eq: tritter mixed modular probabilities}
\end{align}

Equation~\eqref{eq: tritter mixed modular probabilities} does not introduce a new mixed-state distinguishability measure: the complex triple trace was already identified as an independent quantity governing three-photon interference in Ref.~\cite{jones_distinguishability_2023}. The contribution of the present formulation is to show that this quantity is exactly the average cyclic symmetry of the input and that its different phase quadratures are selected by the three modular photon-number sectors of the balanced tritter.

\subsection{Correlated internal states}
\label{subsec: tritter correlated states}
The product formulas above assume that the internal states of the three photons are independent. More generally, a pure three-photon state containing one photon in each spatial mode can be written as
\begin{align}
    \ket{\psi}&=\int \dd\omega_0\dd\omega_1\dd\omega_2\,F(\omega_0,\omega_1,\omega_2)\hat a_0^\dagger(\omega_0)\hat a_1^\dagger(\omega_1)\notag\\
    &\qquad\times\hat a_2^\dagger(\omega_2)\vac,
\end{align}
where the spectral amplitude $F$ does not necessarily factorize. The average cyclic symmetry is then
\begin{equation}
    \expval{\hat P}=\int \dd\omega_0\dd\omega_1\dd\omega_2\,F^*(\omega_0,\omega_1,\omega_2)F(\omega_2,\omega_0,\omega_1).
\end{equation}
Consequently, the modular output probabilities are directly related to the cyclic symmetry properties of the three-variable spectral amplitude. In particular, if $F$ belongs to one of the symmetry sectors of $\hat P$, such that $\hat P\ket{\psi}=\lambda^q\ket{\psi}$, the corresponding modular outcome is obtained with unit probability. The tritter followed by photon counting can therefore be interpreted as a measurement of the cyclic symmetry of the joint three-photon spectrum.

For a general mixed and possibly correlated internal state $\rho_{012}$, the same result holds after replacing $\expval{\hat P}$ by
$\Tr(\hat P\rho_{012})$. Contrary to the independent case, this quantity cannot in general be expressed only in terms of the one-photon reduced density operators.

\subsection{Metrological application}
\label{subsec: tritter metrology}
We now consider the tritter as a measurement device for estimating a parameter $\theta$ encoded in the input state through the evolution
\begin{equation}
    \ket{\psi(\theta)}=e^{-i\hat H\theta}\ket{\psi}.
\end{equation}
We assume that the initial probe state is cyclically symmetric, $\hat P\ket{\psi}=\ket{\psi}$, and consider the estimation of $\theta$ around $\theta=0$. Applying Eq.~\eqref{eq: fourier fisher sym} to $n=3$, the Fisher information associated with the modular photon-counting measurement is
\begin{equation}\label{eq: tritter Fisher symmetry}
    \mathcal F = \frac{4}{9} \Delta^2 \left( 2\hat H -\hat P\hat H\hat P^{-1} -\hat P^2\hat H\hat P^{-2} \right).
\end{equation}
The measurement is thus sensitive only to the part of the evolution generator that breaks the cyclic symmetry between the three spatial modes. In particular, if $\hat H$ is invariant under cyclic permutations, such that $\hat P\hat H\hat P^{-1}=\hat H$, the Fisher information vanishes. Such an evolution modifies the three modes identically and cannot be detected through the output photon-number distribution of the tritter.

To provide a more concrete expression, we assume that the parameter is encoded locally in the spatial mode labelled $0$. We write the corresponding generator as $\hat H_0$. Its cyclic permutations define the analogous generators acting on the other two modes,
\begin{equation}
    \hat H_1=\hat P\hat H_0\hat P^{-1},
    \qquad
    \hat H_2=\hat P^2\hat H_0\hat P^{-2}.
\end{equation}
Equation~\eqref{eq: tritter Fisher symmetry} then becomes
\begin{equation}\label{eq: tritter local Fisher}
    \mathcal F=\frac{4}{9}\Delta^2\left(2\hat H_0-\hat H_1-\hat H_2\right).
\end{equation}
The tritter measurement therefore probes the difference between the evolution applied to mode $0$ and the corresponding evolutions of the other two modes. This is the three-mode analogue of the two-mode result, where the HOM measurement is sensitive to the antisymmetric part of the generator. For comparison, the quantum Fisher information of the pure probe state is
\begin{equation}
    \mathcal Q=4\Delta^2\hat H_0.
\end{equation}
Consequently, the optimality of the tritter measurement depends on the correlations between the three local generators. Expanding Eq.~\eqref{eq: tritter local Fisher}, one obtains
\begin{align}
    \mathcal F&=\frac{4}{9}\Big(4\Delta^2\hat H_0+\Delta^2\hat H_1+\Delta^2\hat H_2-4\operatorname{Cov}(\hat H_0,\hat H_1) \notag\\
    &\qquad-4\operatorname{Cov}(\hat H_0,\hat H_2)+2\operatorname{Cov}(\hat H_1,\hat H_2)\Big),
\end{align}
where
\begin{equation}
    \operatorname{Cov}(\hat A,\hat B)=\frac{1}{2}\expval{\hat A\hat B+\hat B\hat A}-\expval{\hat A}\expval{\hat B}.
\end{equation}
The symmetry properties and correlations of the probe state therefore directly determine the precision of the protocol. As in the two-mode case, the tritter measurement is optimal only when the variance of the symmetry-breaking part of the generator reproduces the full quantum Fisher information.

For a local time delay applied to mode $0$, the evolution generator is $\hat H=\hat\omega_0$. Equation~\eqref{eq: tritter local Fisher} then gives
\begin{equation}
    \mathcal F=\frac{4}{9}\Delta^2\left(2\hat\omega_0-\hat\omega_1-\hat\omega_2\right),
\end{equation}
while the quantum Fisher information is
\begin{equation}
    \mathcal Q=4\Delta^2\hat\omega_0.
\end{equation}
For a cyclically symmetric probe state, the three frequency variances are equal,
\begin{equation}
    \Delta^2\hat\omega_0=\Delta^2\hat\omega_1=\Delta^2\hat\omega_2=\sigma^2,
\end{equation}
and the three covariances are also equal,
\begin{equation}
    \operatorname{Cov}(\hat\omega_0,\hat\omega_1)=\operatorname{Cov}(\hat\omega_1,\hat\omega_2)=\operatorname{Cov}(\hat\omega_2,\hat\omega_0)=C.
\end{equation}
The Fisher and quantum Fisher information consequently reduce to
\begin{equation}
    \mathcal F=\frac{8}{3}(\sigma^2-C),\qquad\mathcal Q=4\sigma^2.
\end{equation}
Equality between the two quantities is obtained if and only if
\begin{equation}
    C=-\frac{\sigma^2}{2}.
\end{equation}
This condition is equivalent to
\begin{equation}
    \Delta^2\left(\hat\omega_0+\hat\omega_1+\hat\omega_2\right)=3\sigma^2+6C=0.
\end{equation}
Optimal probe states therefore have a perfectly defined total frequency. At the level of the three-photon spectral amplitude, they satisfy
\begin{equation}
    F(\omega_0,\omega_1,\omega_2)\propto f(\omega_0,\omega_1,\omega_2)\delta(\omega_0+\omega_1+\omega_2-\Omega),
\end{equation}
where $f$ must additionally possess the required cyclic symmetry. Such states constitute the three-photon generalization of the perfectly frequency anticorrelated states that make the HOM time-delay measurement optimal \cite{descamps_time-frequency_2023}. In practice, the Dirac distribution is replaced by a narrow distribution of the total frequency, so that the equality $\mathcal F=\mathcal Q$ is approached as the sum-frequency variance decreases.

As detailed in App.~\ref{app: n modes metro optimality}, this result generalizes to an $n$-mode Fourier interferometer. For a parameter encoded by a local generator, and with a cyclically symmetric probe, the Fisher information equals the quantum Fisher information if and only if the variance of the collective operator vanishes,
\begin{equation}
    \Delta^2\left(\hat \omega_0+\cdots+\hat \omega_{n-1}\right)=0,
\end{equation}
corresponding to joint spectra restricted to a fixed total frequency.

\section{Conclusion}
\label{sec: conclusion}

In this work, we applied the spatial-symmetry framework developed in our previous studies to several few-photon interferometric configurations. We showed that the mapping of input spatial symmetry onto constraints on the output photon-number distribution by discrete Fourier-transform interferometers provides a common interpretation of apparently different interference phenomena.

We first reinterpreted the Mach-Zehnder interferometer as a generalized HOM interferometer and related its coincidence probability and metrological performance to the exchange symmetry of an effective input state. We then considered two coherently pumped parametric photon-pair sources, encompassing both SPDC and SFWM implementations, and showed that the coincidence probability measures the overlap between the two generated states. The same symmetry-parity correspondence also provides a direct explanation of the nodal line appearing in generalized HOM interference with odd-photon-number inputs.

Finally, we extended the analysis to a three-mode tritter. The modular output quantity $m_1+2m_2\pmod 3$ projects the input state onto the three eigenspaces of the cyclic permutation operator. For independent photons, the corresponding average symmetry contains the triad phase, while for general correlated states it probes the cyclic symmetry of the complete three-photon spectrum. For local time-delay estimation, we showed that the tritter measurement becomes optimal when the total frequency of the three photons is perfectly defined.

These examples demonstrate that spatial symmetry provides a unified and experimentally accessible framework for understanding few-photon interference and its applications to quantum metrology.

\begin{acknowledgments}
    We thank Arne Keller, Florent Baboux and Sara Ducci for inspiring discussions. We acknowledge funding from Plan France 2030 through Projects No. ANR-22-PETQ-0006 and No. ANR-24-CE97-0003 EQUIPPS. E.~D. acknowledges the use of OpenAI's ChatGPT and Codex, based on GPT-5 models, to assist with literature searches, the organization and linguistic revision of the manuscript, and the exploration and verification of selected analytical derivations. All references, calculations, interpretations, and AI-assisted text were critically reviewed and verified by the authors, who take full responsibility for the content of the manuscript.
\end{acknowledgments}

\bibliographystyle{unsrturl} 
\bibliography{refs}

\appendix
\onecolumngrid
\section{Mach-Zehnder computations}
\label{app: MZI}
In this appendix, we provide the explicit derivations supporting Eqs.~\eqref{eq: MZI Pc}, \eqref{eq: MZI Fi}, and \eqref{eq: MZI Qfi}.

\subsection{Computation of the coincidence probability}
Applying the beam-splitter relation to the initial probe state yields
\begin{equation}
    \ket{\psi_\text{eff}}=\ket{\psi^b}-\ket{\psi^a},
\end{equation}
where 
\begin{align}
    \ket{\psi^b} =& \frac{1}{2} \int \dd\omega_0 \dd\omega_1\, F^s(\omega_0,\omega_1) \left( \hat a_0^\dagger(\omega_0) \hat a_0^\dagger(\omega_1) - \hat a_1^\dagger(\omega_0) \hat a_1^\dagger(\omega_1) \right) \vac,\\
    \ket{\psi^a} =& \int \dd\omega_0 \dd\omega_1\, F^a(\omega_0,\omega_1) \hat a_0^\dagger(\omega_0) \hat a_1^\dagger(\omega_1) \vac.
\end{align}
The coincidence probability is then given by 
\begin{equation}
    P_\text{c}=\frac{1}{2}\left(1-\bra{\psi_\text{eff}}\hat V^\dagger\hat S \hat V\ket{\psi_\text{eff}}\right).
\end{equation}
As stated in the main text and as can be verified explicitly, the state $\ket{\psi_\text{eff}}$ is antisymmetric. The coincidence probability therefore simplifies to
\begin{subequations}
    \begin{align}
        P_\text{c}&=\frac{1}{2}\left(1+\bra{\psi_\text{eff}}e^{i\hat \omega_0\theta}\hat Se^{-i\hat\omega_0\theta}\hat S\ket{\psi_\text{eff}}\right),\\
        &=\frac{1}{2}\left(1+\bra{\psi_\text{eff}}e^{i\hat \omega_0\theta} e^{-i\hat S\hat \omega_0 \hat S \theta}\ket{\psi_\text{eff}}\right),\\
        &=\frac{1}{2}\left(1+\bra{\psi_\text{eff}}e^{i (\hat\omega_0-\hat\omega_1 )\theta}\ket{\psi_\text{eff}}\right).
    \end{align}    
\end{subequations}
The states $\ket{\psi^b}$ and $\ket{\psi^a}$ have different photon-number distributions and are therefore orthogonal. Furthermore, the operator $e^{i(\hat\omega_0-\hat\omega_1)\theta}$ preserves the photon-number distribution. We can thus compute the expectation value by treating the bunched and antibunched components independently. For $\ket{\psi^b}$, we have
\begin{align}
    e^{i(\hat\omega_0-\hat\omega_1)\theta}\ket{\psi^b}&=\frac{1}{2} \int \dd\omega_0 \dd\omega_1\, F^s(\omega_0,\omega_1) \left( e^{i(\omega_0+\omega_1)\theta}\hat a_0^\dagger(\omega_0) \hat a_0^\dagger(\omega_1)\right.\notag\\
    &\qquad\left. - e^{-i(\omega_0+\omega_1)\theta}\hat a_1^\dagger(\omega_0) \hat a_1^\dagger(\omega_1) \right) \vac.
\end{align}
The scalar product with $\ket{\psi^b}$ thus yields
\begin{subequations}
    \begin{align}
        \bra{\psi^b}e^{i(\hat\omega_0-\hat\omega_1)\theta}\ket{\psi^b}&=\frac{1}{2} \int \dd\omega_0 \dd\omega_1\, \abs{F^{s}(\omega_0,\omega_1)}^2  \left( e^{i(\omega_0+\omega_1)\theta} + e^{-i(\omega_0+\omega_1)\theta} \right),\\
        &=\int \dd\omega_0 \dd\omega_1 \,\abs{F^s(\omega_0,\omega_1)}^2 \cos\left((\omega_0+\omega_1)\theta\right).
    \end{align}
\end{subequations}
Using the trigonometric formula $\cos(2x)=2\cos^2(x)-1$, we can rewrite this as
\begin{align}
    \bra{\psi^b}e^{i(\hat\omega_0-\hat\omega_1)\theta}\ket{\psi^b}&=2\int \dd\omega_0 \dd\omega_1\, \abs{F^s(\omega_0,\omega_1)}^2 \cos^2\left(\tfrac{(\omega_0+\omega_1)\theta}{2}\right)\notag\\
    &\qquad-\int \dd\omega_0 \dd\omega_1\, \abs{F^s(\omega_0,\omega_1)}^2.
\end{align}
Similarly, for the antisymmetric part,
\begin{subequations}
    \begin{align}
        \bra{\psi^a}e^{i(\hat\omega_0-\hat\omega_1)\theta}\ket{\psi^a}&=\int \dd\omega_0 \dd\omega_1 \,\abs{F^a(\omega_0,\omega_1)}^2  e^{i(\omega_0-\omega_1)\theta},\\
        &=\frac{1}{2}\int \dd\omega_0 \dd\omega_1\,\abs{F^a(\omega_0,\omega_1)}^2\left(e^{i(\omega_0-\omega_1)\theta}+e^{-i(\omega_0-\omega_1)\theta}\right),\\
        &=\int \dd\omega_0 \dd\omega_1\, \abs{F^a(\omega_0,\omega_1)}^2 \cos\left((\omega_0-\omega_1)\theta\right),\\
        &=2\int \dd\omega_0 \dd\omega_1\, \abs{F^a(\omega_0,\omega_1)}^2 \cos^2\left(\tfrac{(\omega_0-\omega_1)\theta}{2}\right)\notag\\
        &\qquad-\int \dd\omega_0 \dd\omega_1 \,\abs{F^a(\omega_0,\omega_1)}^2.
    \end{align}
\end{subequations}
Summing both contributions and using the normalization relation
\begin{equation}
    \int\dd\omega_0\dd\omega_1\, \abs{F^s(\omega_0,\omega_1)}^2 + \int\dd\omega_0\dd\omega_1\, \abs{F^a(\omega_0,\omega_1)}^2 = 1,
\end{equation}
we get
\begin{equation}
    P_\text{c}=\int \dd\omega_0 \dd\omega_1\, \abs{F^s(\omega_0,\omega_1)}^2 \cos^2\left(\tfrac{(\omega_0+\omega_1)\theta}{2}\right)+\int \dd\omega_0 \dd\omega_1\, \abs{F^a(\omega_0,\omega_1)}^2 \cos^2\left(\tfrac{(\omega_0-\omega_1)\theta}{2}\right).
\end{equation}

\subsection{Computation of the Fisher information}
Since the state $\ket{\psi_\text{eff}}$ is antisymmetric, the Fisher information for estimating $\theta$ at the origin is
\begin{equation}
    \lim_{\theta\to 0} \mathcal F = \Delta^2_{\ket{\psi_\text{eff}}}(\hat \omega_0 - \hat \omega_1).
\end{equation}
To compute the required expectation values, we note that $\hat\omega_0$ and $\hat\omega_1$ preserve the photon-number distribution. The bunched and antibunched parts can therefore be treated independently. For $k=1,2$, we have
\begin{equation}
    \bra{\psi_\text{eff}}(\hat \omega_0 - \hat \omega_1)^k\ket{\psi_\text{eff}} = \bra{\psi^b}(\hat \omega_0 - \hat \omega_1)^k\ket{\psi^b} + \bra{\psi^a}(\hat \omega_0 - \hat \omega_1)^k\ket{\psi^a}.
\end{equation}
For the bunched part,
    \begin{align}
        (\hat \omega_0-\hat \omega_1)^k\ket{\psi^b}&=\frac{1}{2} \int \dd\omega_0 \dd\omega_1\, F^s(\omega_0,\omega_1) \left( (\hat \omega_0+\hat \omega_1)^k\hat a_0^\dagger(\omega_0) \hat a_0^\dagger(\omega_1) \right.\notag\\
        &\qquad\left.- (\hat \omega_0-\hat \omega_1)^k\hat a_1^\dagger(\omega_0) \hat a_1^\dagger(\omega_1) \right) \vac,\\
        &=\frac{1}{2} \int \dd\omega_0 \dd\omega_1\, F^s(\omega_0,\omega_1) \left( (\omega_0+\omega_1)^k\hat a_0^\dagger(\omega_0) \hat a_0^\dagger(\omega_1)\right.\notag\\
        &\qquad\left. - (-1)^k(\omega_0+\omega_1)^k\hat a_1^\dagger(\omega_0) \hat a_1^\dagger(\omega_1) \right) \vac,\\
        &=\frac{1}{2} \int \dd\omega_0 \dd\omega_1\, (\omega_0+\omega_1)^k F^s(\omega_0,\omega_1) \left( \hat a_0^\dagger(\omega_0) \hat a_0^\dagger(\omega_1) \right.\notag\\
        &\qquad\left.- (-1)^k\hat a_1^\dagger(\omega_0) \hat a_1^\dagger(\omega_1) \right) \vac.
    \end{align}
This leads to
\begin{equation}
    \bra{\psi^b}(\hat \omega_0 - \hat \omega_1)^k\ket{\psi^b} = \frac{1}{2}\int \dd\omega_0 \dd\omega_1\, \abs{F^s(\omega_0,\omega_1)}^2 (\omega_0+\omega_1)^k(1+(-1)^k),
\end{equation}
and we have
\begin{align}
    \bra{\psi^b}(\hat \omega_0 - \hat \omega_1)\ket{\psi^b} &= 0, & \bra{\psi^b}(\hat \omega_0 - \hat \omega_1)^2\ket{\psi^b} &= \expval{(\hat \omega_0+\hat \omega_1)^2}_s.
\end{align}
Similarly,
\begin{equation}
    (\hat\omega_0-\hat \omega_1)^k\ket{\psi^a}=\int \dd\omega_0 \dd\omega_1\, F^a(\omega_0,\omega_1) (\omega_0-\omega_1)^k \ket{\omega_0,\omega_1},
\end{equation}
leading to
\begin{equation}
    \bra{\psi^a}(\hat \omega_0 - \hat \omega_1)^k\ket{\psi^a} = \int \dd\omega_0 \dd\omega_1\, \abs{F^a(\omega_0,\omega_1)}^2 (\omega_0-\omega_1)^k,
\end{equation}
and thus
\begin{align}
    \bra{\psi^a}(\hat \omega_0 - \hat \omega_1)\ket{\psi^a} &= 0, & \bra{\psi^a}(\hat \omega_0 - \hat \omega_1)^2\ket{\psi^a} &= \expval{(\hat \omega_0-\hat \omega_1)^2}_a.
\end{align}
We thus obtain
\begin{equation}
    \lim_{\theta\to 0} \mathcal F = \expval{(\hat \omega_0+\hat \omega_1)^2}_s+\expval{(\hat \omega_0-\hat \omega_1)^2}_a.
\end{equation}

\subsection{Computation of the quantum Fisher information}
By definition, the quantum Fisher information is
\begin{equation}
    \mathcal Q = 4\Delta^2_{\ket{\psi_\text{eff}}}(\hat \omega_0).
\end{equation}
We compute it similarly by treating the bunched and antibunched parts independently. For the bunched part, we have
\begin{equation}
    \hat\omega_0^k\ket{\psi^b}=\frac{1}{2} \int \dd\omega_0 \dd\omega_1\, F^s(\omega_0,\omega_1) (\omega_0+\omega_1)^k\hat a_0^\dagger(\omega_0) \hat a_0^\dagger(\omega_1) \vac,
\end{equation}
and thus
\begin{equation}
    \bra{\psi^b}\hat\omega_0^k\ket{\psi^b} = \frac{1}{2}\int \dd\omega_0 \dd\omega_1\, \abs{F^s(\omega_0,\omega_1)}^2 (\omega_0+\omega_1)^k,
\end{equation}
leading to
\begin{align}
    \bra{\psi^b}\hat\omega_0\ket{\psi^b} &= \frac{1}{2}\expval{\hat \omega_0+\hat \omega_1}_s,&
    \bra{\psi^b}\hat\omega_0^2\ket{\psi^b} &= \frac{1}{2}\expval{(\hat \omega_0+\hat \omega_1)^2}_s.
\end{align}
For the antibunched part,
\begin{equation}
    \hat\omega_0^k\ket{\psi^a}=\int \dd\omega_0 \dd\omega_1\, F^a(\omega_0,\omega_1) \omega_0^k \ket{\omega_0,\omega_1},
\end{equation}
leading to
\begin{equation}
    \bra{\psi^a}\hat\omega_0^k\ket{\psi^a} = \int \dd\omega_0 \dd\omega_1\, \abs{F^a(\omega_0,\omega_1)}^2 \omega_0^k,
\end{equation}
and
\begin{align}
    \bra{\psi^a}\hat\omega_0\ket{\psi^a} &= \expval{\hat \omega_0}_a, & \bra{\psi^a}\hat\omega_0^2\ket{\psi^a} &= \expval{\hat \omega_0^2}_a.
\end{align}
Finally, we have
\begin{equation}
    \mathcal Q=4\Delta^2_{\ket{\psi_\text{eff}}}(\hat \omega_0) = 2\expval{(\hat \omega_0+\hat \omega_1)^2}_s + 4\expval{\hat \omega_0^2}_a - \left(\expval{\hat \omega_0+\hat \omega_1}_s + 2\expval{\hat \omega_0}_a\right)^2.
\end{equation}

\section{Full photon-number distribution for NOON-like states}
\label{app: full distri n noon}
In this appendix, we prove the theorem stated in Sec.~\ref{subsec: two sources coincidence}.

\begin{thm}
    Consider the balanced beam splitter defined by Eq.~\eqref{eq: bs} and the input state
    \begin{align}
        \ket{\psi}&=\int \!\dd\omega_1\cdots\dd\omega_nF(\omega_1,\dots,\omega_n)\hat a_0^\dagger(\omega_1)\cdots\hat a_0^\dagger(\omega_n)\vac\notag\\
        &\;+\!\int \!\dd\omega_1\cdots\dd\omega_nG(\omega_1,\dots,\omega_n)\hat a_1^\dagger(\omega_1)\cdots\hat a_1^\dagger(\omega_n)\vac,
    \end{align}
    composed of $n$ photons that all occupy the same arm in each term, as in a NOON-like state. The photon-number probabilities are
    \begin{equation}
        \P[m_0=k,m_1=n-k]=\frac{\binom{n}{k}}{2^n}\left[1+(-1)^{n-k}\bra{\psi}\hat S\ket{\psi}\right].
    \end{equation}
\end{thm}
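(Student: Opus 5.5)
We propagate the two branches of $\ket{\psi}$ through the beam splitter separately, using the convention of Eq.~\eqref{eq: bs}, $\hat a_0^\dagger\mapsto(\hat a_0^\dagger+\hat a_1^\dagger)/\sqrt2$ and $\hat a_1^\dagger\mapsto(\hat a_0^\dagger-\hat a_1^\dagger)/\sqrt2$. Write $\ket{\psi}=\ket{\psi_F}+\ket{\psi_G}$, where $\ket{\psi_F}$ is the term with all photons in mode $0$ and $\ket{\psi_G}$ the term with all photons in mode $1$. Since $\hat S$ swaps spatial labels, $\hat S\ket{\psi_F}$ is $\ket{\psi_F}$ with every $\hat a_0^\dagger$ replaced by $\hat a_1^\dagger$. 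Consequently
\[
\bra{\psi}\hat S\ket{\psi}=2\Re\bra{\psi_G}\hat S\ket{\psi_F},
\]
because the diagonal terms vanish for $n\geq1$: $\ket{\psi_F}$ and $\hat S\ket{\psi_F}$ lie in orthogonal photon-number sectors $(n,0)$ and $(0,n)$.

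\textbf{Expanding the output.} Expand each product of transformed creation operators and collect terms with $k$ photons in output mode $0$. For $\hat U\ket{\psi_F}$ the component in the $(k,n-k)$ sector is
\[
2^{-n/2}\sum_{|A|=k}\int F\prod_{i\in A}\hat a_0^\dagger(\omega_i)\prod_{i\notin A}\hat a_1^\dagger(\omega_i)\vac ,
\]
summed over subsets $A\subseteq\{1,\dots,n\}$ of size $k$. The $G$ branch gives the same expression with $G$ in place of $F$ and an extra sign $(-1)^{n-k}$. Equivalently, $\hat U\ket{\psi_G}=\hat U\hat S\ket{\tilde\psi_G}$ with $\ket{\tilde\psi_G}=\hat S\ket{\psi_G}$ supported in mode $0$. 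Using $\hat U\hat S=\hat\Pi_1\hat U$, which follows from Eq.~\eqref{eq: sym parity} and $\hat U^2=\1$, the $G$ branch is the $F$-type expansion of $\tilde\psi_G$ multiplied by the parity $(-1)^{n-k}$ on the $(k,n-k)$ sector. This is the cleanest route and avoids tracking signs by hand.

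\textbf{The probability.} Writing $\ket{\chi_k[F]}$ for the $(k,n-k)$ component of $\hat U\ket{\psi_F}$, we get
\[
\P[m_0=k,m_1=n-k]=\bigl\|\chi_k[F]+(-1)^{n-k}\chi_k[\tilde G]\bigr\|^2 .
\]
Two inner products are needed, namely $\langle\chi_k[F],\chi_k[F']\rangle$ for $F'\in\{F,\tilde G\}$.

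\textbf{Main obstacle.} The step requiring care is the claim
\[
\langle\chi_k[F],\chi_k[F']\rangle=\frac{\binom nk}{2^n}\langle\psi_F|\psi_{F'}\rangle ,
\]
with $\langle\psi_F|\psi_{F'}\rangle$ the inner product of the corresponding mode-$0$ states. I would prove it by first symmetrizing $F$ and $F'$, which is harmless since only their symmetric parts contribute. The key fact is that the map $\ket{\psi_F}\mapsto\hat U\ket{\psi_F}$ factorizes. Each photon independently becomes $(\ket0+\ket1)/\sqrt2$ in the spatial label, so in first quantization the output is $\ket{\psi_F}\otimes\ket{+}^{\otimes n}$, symmetrized over the joint (frequency, spatial) labels. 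Projecting onto spatial occupation $(k,n-k)$ is then the same as projecting $\ket{+}^{\otimes n}$ onto the subspace of $n$-bit strings with $k$ zeros. That projection has weight $\binom nk/2^n$, and it factorizes against the symmetric internal part. This argument also takes care of the normalization factors coming from the bosonic commutators.

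\textbf{Conclusion.} With the inner-product identity in hand, the squared norm becomes
\[
\frac{\binom nk}{2^n}\Bigl[\|\psi_F\|^2+\|\psi_G\|^2+2(-1)^{n-k}\Re\langle\psi_F|\tilde\psi_G\rangle\Bigr].
\]
Normalization gives $\|\psi_F\|^2+\|\psi_G\|^2=1$, and $\Re\langle\psi_F|\tilde\psi_G\rangle=\Re\bra{\psi_G}\hat S\ket{\psi_F}$. By the first step, the bracket therefore equals $1+(-1)^{n-k}\bra{\psi}\hat S\ket{\psi}$, which yields the stated distribution.
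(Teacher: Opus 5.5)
Your proof is correct and follows the same route as the paper's. Both arguments expand $\hat U\ket{\psi}$ into the $(k,n-k)$ sectors, where the $G$ branch picks up the sign $(-1)^{n-k}$, then compute each sector's norm and identify the cross term with $\bra{\psi}\hat S\ket{\psi}$. The paper obtains the factor $\binom{n}{k}/2^n$ by symmetrizing $F,G$ and counting $\binom{n}{k}^2 k!(n-k)!/2^n=\binom{n}{k}n!/2^n$ directly. You get the same factor from the first-quantized form $F\otimes\ket{+}^{\otimes n}$ and the sign from $\hat U\hat S=\hat\Pi_1\hat U$; both are valid justifications of the same step.
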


\begin{proof}
    Since all photons occupy the same arm in each term, we may take both functions $F$ and $G$ to be completely symmetric in their arguments. This gives simple expressions for the norm and the symmetry:
    \begin{subequations}
        \begin{align}
            \braket{\psi}&=n!\int \dd\omega_1\cdots\dd\omega_n \left[\abs{F(\omega_1,\dots,\omega_n)}^2+\abs{G(\omega_1,\dots,\omega_n)}^2\right]\\
            \bra{\psi}\hat S\ket{\psi}&=n!\int \dd\omega_1\cdots\dd\omega_n \left[F^*(\omega_1,\dots,\omega_n)G(\omega_1,\dots,\omega_n)+G^*(\omega_1,\dots,\omega_n)F(\omega_1,\dots,\omega_n)\right]
        \end{align}
        We can then compute $\hat U\ket{\psi}$:
        \begin{align}
            \hat{U}\ket{\psi}&=\frac{1}{\sqrt{2}^n}\int \dd\omega_1\cdots\dd\omega_n\left[F(\omega_1,\dots,\omega_n)\prod_{i=1}^n\left(\hat a_0^\dagger(\omega_i)+\hat a_1^\dagger(\omega_i)\right)+G(\omega_1,\dots,\omega_n)\prod_{i=1}^n\left(\hat a_0^\dagger(\omega_i)-\hat a_1^\dagger(\omega_i)\right)\right]\vac
        \end{align}
        We expand the products of creation operators. The symmetry of $F$ and $G$ allows the frequency variables to be exchanged, so only the number $k$ of photons in the first mode matters, together with the corresponding combinatorial factor. It follows that
        \begin{align}
            \hat{U}\ket{\psi}&=\frac{1}{\sqrt{2}^n}\int \dd\omega_1\cdots\dd\omega_n \left[F(\omega_1,\dots,\omega_n)\sum_{k=0}^n \binom{n}{k}\hat a_0^\dagger(\omega_1)\cdots\hat a_0^\dagger(\omega_k)\hat a_1^\dagger(\omega_{k+1})\cdots\hat a_1^\dagger(\omega_n)\right.\notag\\
            &\qquad\left.+G(\omega_1,\dots,\omega_n)\sum_{k=0}^n\binom{n}{k}(-1)^{n-k}\hat a_0^\dagger(\omega_1)\cdots\hat a_0^\dagger(\omega_k)\hat a_1^\dagger(\omega_{k+1})\cdots\hat a_1^\dagger(\omega_n)\right]\vac\\
            &=\frac{1}{\sqrt{2}^n}\int \dd\omega_1\cdots\dd\omega_n\sum_{k=0}^n \binom{n}{k} \left[F(\omega_1,\dots,\omega_n)+(-1)^{n-k}G(\omega_1,\dots,\omega_n)\right]\notag\\
            &\qquad\times\hat a_0^\dagger(\omega_1)\cdots\hat a_0^\dagger(\omega_k)\hat a_1^\dagger(\omega_{k+1})\cdots\hat a_1^\dagger(\omega_n)\vac\\
            &=\sum_{k=0}^{n}\ket{\psi_k}
        \end{align}
        where $\ket{\psi_k}$ is the unnormalized component of $\hat U\ket{\psi}$ containing exactly $k$ photons in the first mode and $n-k$ in the second. It is given by
        \begin{equation}
            \ket{\psi_k}=\frac{\binom{n}{k}}{\sqrt{2}^n}\int\dd\omega_1\cdots\dd\omega_n\left[F(\omega_1,\dots,\omega_n)+(-1)^{n-k}G(\omega_1,\dots,\omega_n)\right]\hat a_0^\dagger(\omega_1)\cdots\hat a_0^\dagger(\omega_k)\hat a_1^\dagger(\omega_{k+1})\cdots\hat a_1^\dagger(\omega_n)\vac
        \end{equation}
        The probability $\P[m_0=k,m_1=n-k]$ is the norm of $\ket{\psi_k}$:
        \begin{align}
            \P[m_0=k,m_1=n-k]&=\braket{\psi_k}\\
            &=\frac{\binom{n}{k}^2}{2^n}k!(n-k)!\int \dd\omega_1\cdots\dd\omega_n \abs{F(\omega_1,\dots,\omega_n)+(-1)^{n-k}G(\omega_1,\dots,\omega_n)}^2\\
            &=\frac{\binom{n}{k}}{2^n}n!\int\dd\omega_1\cdots\dd\omega_n \left[\abs{F(\omega_1,\dots,\omega_n)}^2+\abs{G(\omega_1,\dots,\omega_n)}^2\right]+\frac{\binom{n}{k}}{2^n}n!(-1)^{n-k}\notag\\
            &\qquad \times\int\dd\omega_1\cdots\dd\omega_n \left[F^*(\omega_1,\dots,\omega_n)G(\omega_1,\dots,\omega_n)+G^*(\omega_1,\dots,\omega_n)F(\omega_1,\dots,\omega_n)\right]\\
            &=\frac{\binom{n}{k}}{2^n}\left(1+(-1)^{n-k}\bra{\psi}\hat S\ket{\psi}\right)
        \end{align}
        This is the desired result. As a consistency check, the binomial formula shows that the probabilities sum to $1$, while the sum over all even values of $n-k$ equals $(1+\bra{\psi}\hat S\ket{\psi})/2$.
    \end{subequations}    
\end{proof}

\section{Symmetry condition for NOON-like states}
\label{app: sym condition for NOON}
In this appendix, we show the statement used in Sec.~\ref{subsec: two sources metrology}.
\begin{thm}
    The generic state
    \begin{align}
        \ket{\psi_\text{in}}&=\frac{1}{2}\int \dd\omega_0 \dd\omega_1\,F_0(\omega_0,\omega_1)\hat a_0^\dagger(\omega_0)\hat a_0^\dagger(\omega_1)\vac+\frac{1}{2}\int \dd\omega_0 \dd\omega_1\,F_1(\omega_0,\omega_1)\hat a_1^\dagger(\omega_0)\hat a_1^\dagger(\omega_1)\vac,
    \end{align}
    satisfies $\bra{\psi_\text{in}}\hat S\ket{\psi_\text{in}}=\pm 1$ if and only if $F_1(\omega_0,\omega_1)=\pm F_0(\omega_0,\omega_1)$.
\end{thm}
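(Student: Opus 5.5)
The strategy is to view $\hat S$ as a unitary involution and apply the equality case of Cauchy--Schwarz. Since $\hat S^2=\1$ and $\hat S$ is Hermitian, for the normalized state $\ket{\psi_\text{in}}$ we have $\abs{\bra{\psi_\text{in}}\hat S\ket{\psi_\text{in}}}\le \norm{\hat S\ket{\psi_\text{in}}}\,\norm{\ket{\psi_\text{in}}}=1$. Equality with value $\pm1$ holds if and only if $\hat S\ket{\psi_\text{in}}=\pm\ket{\psi_\text{in}}$. One way to see this is the identity
\begin{equation*}
\norm{\hat S\ket{\psi_\text{in}}\mp\ket{\psi_\text{in}}}^2=2\mp2\bra{\psi_\text{in}}\hat S\ket{\psi_\text{in}},
\end{equation*}
which vanishes exactly when the expectation value equals $\pm1$. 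The statement about the expectation value thus reduces to an eigenvalue equation for $\hat S$.

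Next I would compute $\hat S\ket{\psi_\text{in}}$ explicitly. Exchanging the mode labels gives
\begin{equation*}
\hat S\ket{\psi_\text{in}}=\tfrac12\int F_0\,\hat a_1^\dagger\hat a_1^\dagger\vac+\tfrac12\int F_1\,\hat a_0^\dagger\hat a_0^\dagger\vac .
\end{equation*}
The components with both photons in mode $0$ and with both photons in mode $1$ lie in orthogonal Fock subspaces. The eigenvalue equation therefore splits into two conditions:
\begin{equation*}
\int (F_1\mp F_0)(\omega_0,\omega_1)\,\hat a_0^\dagger(\omega_0)\hat a_0^\dagger(\omega_1)\vac=0
\end{equation*}
and the analogous condition in mode $1$. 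The second condition is the first multiplied by $\pm1$, so the two are equivalent.

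The remaining step, which is the only delicate point, is to show that a two-photon single-mode state $\int D(\omega_0,\omega_1)\hat a^\dagger(\omega_0)\hat a^\dagger(\omega_1)\vac$ vanishes if and only if the symmetric part of $D$ vanishes. I would check this by computing its norm as $2\int\abs{D^{s}}^2$, as in the normalization computations of App.~\ref{app: full distri n noon}, where $D^s(\omega_0,\omega_1)=[D(\omega_0,\omega_1)+D(\omega_1,\omega_0)]/2$. Because the two photons in each arm share all internal degrees of freedom, the paper takes $F_0$ and $F_1$ to be symmetric. Hence $D=F_1\mp F_0$ is symmetric and $D^s=D$, so the condition becomes $F_1=\pm F_0$ (almost everywhere, which is the natural meaning of equality of amplitudes). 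Without this symmetry convention the result would only determine the symmetric parts, and I would state that caveat.

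For the converse direction, assume $F_1=\pm F_0$. Then $\hat S\ket{\psi_\text{in}}=\pm\ket{\psi_\text{in}}$ by direct substitution, and with the normalization $\braket{\psi_\text{in}}=1$ the expectation value is $\pm1$. As a cross-check, one can also use Eq.~\eqref{eq: two sources symmetry overlap}. With $p_0+p_1=1$, the expectation value equals $\Re\int F_0^*F_1$, and by Cauchy--Schwarz together with AM--GM this is bounded by $2\sqrt{p_0p_1}\le1$. Equality requires $F_1$ to be proportional to $F_0$ with a real phase of $\pm1$, and requires $p_0=p_1$; together these give $F_1=\pm F_0$.
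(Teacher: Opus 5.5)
Your proof is correct, but its main line differs from the paper's. Your closing cross-check is in fact essentially the paper's whole proof.

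The paper stays at the amplitude level. It starts from $\bra{\psi_\text{in}}\hat S\ket{\psi_\text{in}}=\Re\int F_0^*F_1$ and applies Cauchy--Schwarz to get the bound $\sqrt{x(2-x)}$ with $x=\int\abs{F_0}^2$, which is your $2\sqrt{p_0p_1}$. It then uses $x(2-x)\le 1$ to force equal weights, and reads $F_1=\lambda F_0$, $\abs{\lambda}=1$, $\lambda=\pm1$ off the equality case.

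You instead first establish the operator-level equivalence $\expval{\hat S}=\pm1 \Leftrightarrow \hat S\ket{\psi_\text{in}}=\pm\ket{\psi_\text{in}}$. This holds for any normalized state and any Hermitian unitary. Only then do you pass to amplitudes, using the orthogonality of the $(2,0)$ and $(0,2)$ sectors and the norm $2\int\abs{D^s}^2$ of a single-mode pair state.

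What your route buys:
\begin{itemize}
\item It gives both implications at once.
\item It does not depend on the specific overlap formula.
\item It carries over with cosmetic changes to the $n$-photon NOON-like states, to type-II sources, or to the cyclic operator $\hat P$ of the tritter, where $\abs{\expval{\hat P}}=1$ iff the state is an eigenvector.
\item It isolates exactly where the exchange symmetry $F_j(\omega_0,\omega_1)=F_j(\omega_1,\omega_0)$ enters. The paper's appendix proof uses this symmetry only implicitly, through its normalization and overlap formulas. Your caveat that otherwise only the symmetric parts would be fixed is accurate.
\end{itemize}

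The paper's route, for its part, produces the quantitative bound $\abs{\expval{\hat S}}\le 2\sqrt{p_0p_1}$ along the way. This ties the result directly to the main-text remark that unbalanced pair-generation probabilities limit the interference visibility.
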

\begin{proof}
    Recall the expressions
    \begin{align}
        \braket{\psi_\text{in}}&=\frac{1}{2}\int \dd\omega_0 \dd\omega_1\,\left[\abs{F_0(\omega_0,\omega_1)}^2+\abs{F_1(\omega_0,\omega_1)}^2\right]=1,\\
        \bra{\psi_\text{in}}\hat S \ket{\psi_\text{in}}&=\Re\left[\int \dd\omega_0 \dd\omega_1\,F_0^*(\omega_0,\omega_1)F_1(\omega_0,\omega_1)\right].
    \end{align}
    Applying the Cauchy-Schwarz inequality to the symmetry expression gives
    \begin{subequations}
        \begin{align}
            \abs{\bra{\psi_\text{in}}\hat S\ket{\psi_\text{in}}}&\leq \abs{\int \dd\omega_0 \dd\omega_1\,F_0^*(\omega_0,\omega_1)F_1(\omega_0,\omega_1)},\\
            &\leq \sqrt{\int \dd\omega_0 \dd\omega_1\,\abs{F_0(\omega_0,\omega_1)}^2}\sqrt{\int \dd\omega_0 \dd\omega_1\,\abs{F_1(\omega_0,\omega_1)}^2},\\
            &=\sqrt{\int \dd\omega_0 \dd\omega_1\,\abs{F_0(\omega_0,\omega_1)}^2}\sqrt{2-\int \dd\omega_0 \dd\omega_1\,\abs{F_0(\omega_0,\omega_1)}^2},\label{subeq: last line CS}
        \end{align}
    \end{subequations}
    where, in the last line, we used the normalization of $\ket{\psi_\text{in}}$ to express the upper bound only in terms of $F_0$. Since the map $f:x\mapsto x(2-x)$ has a maximum value of $1$ at $x=1$, the right-hand side of Eq.~\eqref{subeq: last line CS} reaches $1$ only if
    \begin{equation}
        \int \dd\omega_0 \dd\omega_1\,\abs{F_0(\omega_0,\omega_1)}^2=\int \dd\omega_0 \dd\omega_1\,\abs{F_1(\omega_0,\omega_1)}^2=1.
    \end{equation}
    Equality in the Cauchy-Schwarz inequality holds if and only if $F_1(\omega_0,\omega_1)=\lambda F_0(\omega_0,\omega_1)$ for some $\lambda\in\C$. The normalization condition gives $\abs{\lambda}=1$, while the symmetry condition gives $\lambda=\pm1$.
\end{proof}

\section{Fisher and quantum Fisher information for the two-source setup}
\label{app: computation two sources metro}
In this appendix, we derive Eqs.~\eqref{eq: FI two sources} and \eqref{eq: QFI two sources}.

\begin{thm}
    Assume that
    \begin{equation}
        \ket{\psi}=\frac{1}{\sqrt{2}}\big(\ket{\varphi}\otimes\vac+\vac\otimes\ket{\varphi}\big),
    \end{equation}
    where $\ket{\varphi}=\frac{1}{\sqrt{2}}\int \dd\omega_0\dd\omega_1\,F(\omega_0,\omega_1)\hat a^\dagger(\omega_0)\hat a^\dagger(\omega_1)\vac$ is a normalized two-photon state in a single spatial mode. Assume further that
    \begin{equation}
        \hat H=\hat H_l\otimes \1,
    \end{equation}
    with $\hat H_l\vac=0$. Then
    \begin{align}
        \Delta^2_{\ket{\psi}}\hat H=\frac{1}{4}\bra{\varphi}\hat H_l^2\ket{\varphi}+\frac{1}{4}\Delta^2_{\ket{\varphi}}\hat H_l, &&
        \Delta^2_{\ket{\psi}}(\hat H-\hat S\hat H\hat S)=\bra{\varphi}\hat H_l^2\ket{\varphi}.
    \end{align}
\end{thm}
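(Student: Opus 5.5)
Write the state as $\ket{\psi}=\frac{1}{\sqrt2}(\ket{A}+\ket{B})$ with $\ket{A}=\ket{\varphi}\otimes\vac$ and $\ket{B}=\vac\otimes\ket{\varphi}$. These two components are orthonormal: they live in different photon-number sectors, $(2,0)$ versus $(0,2)$. The exchange operator swaps them, $\hat S\ket{A}=\ket{B}$ and $\hat S\ket{B}=\ket{A}$. Since $\hat H=\hat H_l\otimes\1$ and $\hat H_l\vac=0$, the generator gives $\hat H\ket{B}=0$ and $\hat H\ket{A}=(\hat H_l\ket{\varphi})\otimes\vac$. This last vector stays in the $(2,0)$ sector, because $\hat H_l$ is number-preserving in the relevant cases; in general, $\hat H$ acting on mode $0$ cannot create photons in mode $1$, so it remains orthogonal to $\ket{B}$. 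Likewise, $\hat S\hat H\hat S$ acts as $\1\otimes\hat H_l$: it annihilates $\ket{A}$ and maps $\ket{B}$ to $\vac\otimes\hat H_l\ket{\varphi}$.

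For the first identity, I compute the two moments term by term. The first moment is $\bra{\psi}\hat H\ket{\psi}=\tfrac12\bra{A}\hat H\ket{A}=\tfrac12\bra{\varphi}\hat H_l\ket{\varphi}$. The cross terms vanish because $\hat H\ket{B}=0$ and $\hat H\ket{A}$ has no mode-$1$ photons, hence is orthogonal to $\ket{B}$. The same reasoning gives the second moment $\bra{\psi}\hat H^2\ket{\psi}=\tfrac12\bra{\varphi}\hat H_l^2\ket{\varphi}$. Therefore
\begin{equation*}
\Delta^2_{\ket{\psi}}\hat H=\tfrac12\expval{\hat H_l^2}_\varphi-\tfrac14\expval{\hat H_l}_\varphi^2=\tfrac14\expval{\hat H_l^2}_\varphi+\tfrac14\Delta^2_{\ket{\varphi}}\hat H_l,
\end{equation*}
which is the claimed formula.

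For the second identity, set $\hat K=\hat H-\hat S\hat H\hat S=\hat H_l\otimes\1-\1\otimes\hat H_l$. Its first moment is $\bra{\psi}\hat K\ket{\psi}=0$. One can see this directly, since the two contributions are $\tfrac12\expval{\hat H_l}_\varphi$ with opposite signs. It also follows from the general fact that an $\hat S$-antisymmetric operator has zero mean in the $\hat S$-eigenstate $\ket{\psi}$. The action on the components is $\hat K\ket{A}=\hat H_l\ket{\varphi}\otimes\vac$ and $\hat K\ket{B}=-\vac\otimes\hat H_l\ket{\varphi}$. These two vectors lie in orthogonal sectors, so
\begin{equation*}
\bra{\psi}\hat K^2\ket{\psi}=\tfrac12\|\hat K\ket{A}\|^2+\tfrac12\|\hat K\ket{B}\|^2=\bra{\varphi}\hat H_l^2\ket{\varphi}.
\end{equation*}
This gives $\Delta^2_{\ket{\psi}}\hat K=\bra{\varphi}\hat H_l^2\ket{\varphi}$.

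The only delicate step is justifying that the cross terms vanish. This requires that $\hat H_l$ acting on mode $0$ preserves the property of having no photons in mode $1$, and that it sends vacuum to $0$. Both hold by the tensor-product structure $\hat H=\hat H_l\otimes\1$ together with the hypothesis $\hat H_l\vac=0$. I will state this explicitly, so that orthogonality of the $(2,0)$ and $(0,2)$ sectors can be invoked throughout the calculation.
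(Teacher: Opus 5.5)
Your proof is correct and follows essentially the same route as the paper. Like the paper, you split $\ket{\psi}$ into the $(2,0)$ and $(0,2)$ branches, use $\hat H_l\vac=0$ to annihilate the mode-$1$ branch, and compute the first and second moments directly. The only differences are cosmetic---you obtain $\bra{\psi}\hat K^2\ket{\psi}$ as $\|\hat K\ket{\psi}\|^2$ rather than expanding $\hat K^2\ket{\psi}$, and the orthogonality of $\hat K\ket{A}$ and $\hat K\ket{B}$ needs no number-preservation assumption, since $\bra{\text{vac}}\hat H_l\ket{\varphi}=0$ already follows from Hermiticity and $\hat H_l\vac=0$.
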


\begin{proof}
    For an integer $k$, we have
    \begin{align}
        \hat H^k\ket{\psi}&=\frac{1}{\sqrt{2}}(\hat H_l^k\otimes\1)\big(\ket{\varphi}\otimes\vac+\vac\otimes\ket{\varphi}\big)=\frac{1}{\sqrt{2}}\hat H_l^k\ket{\varphi}\otimes\vac,
    \end{align}
    and thus
    \begin{equation}
        \bra{\psi}\hat H^k\ket{\psi}=\frac{1}{2}\bra{\varphi}\hat H_l^k\ket{\varphi}.
    \end{equation}
    We then obtain
    \begin{equation}
        \Delta^2_{\ket{\psi}}\hat H=\bra{\psi}\hat H^2\ket{\psi}-\bra{\psi}\hat H\ket{\psi}^2=\frac{1}{2}\bra{\varphi}\hat H_l^2\ket{\varphi}-\frac{1}{4}\bra{\varphi}\hat H_l\ket{\varphi}^2=\frac{1}{4}\bra{\varphi}\hat H_l^2\ket{\varphi}+\frac{1}{4}\Delta^2_{\ket{\varphi}}\hat H_l.
    \end{equation}
    Similarly,
    \begin{align}
        (\hat H-\hat S\hat H \hat S)\ket{\psi}&=\frac{1}{\sqrt{2}}\big((\hat H_l\ket{\varphi})\otimes\vac-\vac\otimes(\hat H_l\ket{\varphi})\big),
    \end{align}
    leading to
    \begin{equation}
        \bra{\psi}(\hat H-\hat S\hat H \hat S)\ket{\psi}=\frac{1}{2}\big(\bra{\varphi}\hat H_l\ket{\varphi}-\bra{\varphi}\hat H_l\ket{\varphi}).
    \end{equation}
    We also have
    \begin{equation}
        (\hat H-\hat S\hat H \hat S)^2\ket{\psi}=\frac{1}{\sqrt{2}}\big((\hat H_l^2\ket{\varphi})\otimes\vac+\vac\otimes(\hat H_l^2\ket{\varphi})\big),
    \end{equation}
    with
    \begin{equation}
        \bra{\psi}(\hat H-\hat S\hat H \hat S)^2\ket{\psi}=\frac{1}{2}\big(\bra{\varphi}\hat H_l^2\ket{\varphi}+\bra{\varphi}\hat H_l^2\ket{\varphi}\big)=\bra{\varphi}\hat H_l^2\ket{\varphi}.
    \end{equation}
    We therefore obtain
    \begin{align}
        \Delta^2_{\ket{\psi}}(\hat H-\hat S\hat H\hat S)&=\bra{\psi}(\hat H-\hat S\hat H \hat S)^2\ket{\psi}-\bra{\psi}(\hat H-\hat S\hat H \hat S)\ket{\psi}^2=\bra{\varphi}\hat H_l^2\ket{\varphi}.
    \end{align}
\end{proof}

\section{Computations for the three-mode interferometer}
\label{app: tritter computations}

Let $\ket{i}$, $\ket{j}$ and $\ket{k}$ denote an orthonormal basis of the
internal Hilbert space. With
$\hat P\ket{i,j,k}=\ket{j,k,i}$, we have
\begin{align}
    &\Tr\left[\hat P
    (\rho_0\otimes\rho_1\otimes\rho_2)\right]\notag\\
    &\quad=\sum_{i,j,k}
    \bra{i,j,k}\hat P
    (\rho_0\otimes\rho_1\otimes\rho_2)
    \ket{i,j,k}\\
    &\quad=\sum_{i,j,k}
    \matrixel{i}{\rho_1}{j}
    \matrixel{j}{\rho_2}{k}
    \matrixel{k}{\rho_0}{i}\\
    &\quad=\Tr(\rho_1\rho_2\rho_0)
    =\Tr(\rho_0\rho_1\rho_2),
\end{align}
where the last equality follows from cyclicity of the trace. Repeating the same
calculation for $\hat P^2$ gives
\begin{equation}
    \Tr\left[\hat P^2
    (\rho_0\otimes\rho_1\otimes\rho_2)\right]
    =\Tr(\rho_0\rho_2\rho_1).
\end{equation}
Finally, since the density operators are Hermitian,
\begin{align}
    \Tr(\rho_0\rho_1\rho_2)^* =\Tr[(\rho_0\rho_1\rho_2)^\dagger] =\Tr(\rho_2\rho_1\rho_0) =\Tr(\rho_0\rho_2\rho_1).
\end{align}

\section{General optimality condition for local estimation}
\label{app: n modes metro optimality}
In this appendix, we consider an $n$-mode Fourier interferometer used to estimate a parameter encoded through a local time delay in mode $0$,
\begin{equation}
    \hat U_\theta = e^{-i\theta \hat \omega_0}.
\end{equation}
If one uses a probe state $\ket{\psi}$ symmetric with respect to the action of $\hat P$, the associated protocol gives the Fisher information
\begin{equation}
    \mathcal F = \frac{4}{n^2}\Delta^2\left(n\hat \omega_0-\sum_{j=0}^{n-1}\hat\omega_j\right),
\end{equation}
while the quantum Fisher information is $\mathcal Q=4\Delta^2\hat\omega_0$. The state $\ket{\psi}$ therefore gives $\mathcal F=\mathcal Q$ if and only if
\begin{equation}
    \Delta^2\left(\hat\omega_0+\cdots+\hat\omega_{n-1}\right)=0.
\end{equation}

\begin{proof}
    We define the collective operator
    \begin{equation}
        \hat \Omega=\sum_{j=0}^{n-1}\hat \omega_j,
    \end{equation}
    which allows us to write
    \begin{equation}
        \mathcal F=\frac{4}{n^2}\Delta^2\left(n\hat \omega_0-\hat \Omega\right)=4\Delta^2(\hat\omega_0)-\frac{8}{n}\operatorname{Cov}(\hat\omega_0,\hat\Omega)+\frac{4}{n^2}\Delta^2\hat \Omega.
    \end{equation}
    Since the state $\ket{\psi}$ is invariant under cyclic permutations, for every $j=0,\dots,n-1$ we have
    \begin{equation}
        \operatorname{Cov}(\hat\omega_0,\hat\Omega)=\operatorname{Cov}(\hat P^{-j}\hat\omega_0\hat P^j,\hat P^{-j}\hat\Omega\hat P^j)=\operatorname{Cov}(\hat\omega_j,\hat\Omega),
    \end{equation}
    because $\hat\Omega$ is invariant under cyclic shifts. Therefore,
    \begin{equation}
        \operatorname{Cov}(\hat\omega_0,\hat\Omega)=\frac{1}{n}\sum_{j=0}^{n-1}\operatorname{Cov}(\hat\omega_j,\hat\Omega)=\frac{1}{n}\operatorname{Cov}(\hat\Omega,\hat\Omega)=\frac{1}{n}\Delta^2\hat\Omega.
    \end{equation}
    It follows that
    \begin{equation}
        \mathcal F=4\Delta^2\hat\omega_0-\frac{8}{n^2}\Delta^2\hat \Omega+\frac{4}{n^2}\Delta^2\hat \Omega=4\Delta^2\hat\omega_0-\frac{4}{n^2}\Delta^2\hat \Omega,
    \end{equation}
    and indeed $\mathcal F=\mathcal Q$ if and only if $\Delta^2(\hat \Omega)=0$.
\end{proof}
\end{document}

%% file: style.tikzstyles
\tikzstyle{red dot}=[fill=red, draw=none, shape=circle]
\tikzstyle{green dot}=[fill={rgb,255: red,70; green,176; blue,20}, draw=none, shape=circle]
\tikzstyle{Resize}=[font={\scriptsize}]
\tikzstyle{Big}=[font={\huge}]
\tikzstyle{Photon}=[fill={rgb,255: red,255; green,230; blue,103}, draw=none, shape=circle, minimum size=4pt]
\tikzstyle{Big node}=[fill={rgb,255: red,191; green,191; blue,191}, draw=black, shape=circle]
\tikzstyle{small node}=[fill={rgb,255: red,191; green,191; blue,191}, draw=black, shape=circle, inner sep=0, minimum size=300pt]
\tikzstyle{Blue dot}=[fill=blue, draw=black, shape=circle]
\tikzstyle{Black dot}=[fill=black, draw=none, shape=circle, inner sep=0pt, minimum size=4pt]

\tikzstyle{Fill red}=[-, fill={rgb,255: red,255; green,184; blue,184}]
\tikzstyle{dashes}=[-, dashed, dash pattern=on 1mm off 1mm]
\tikzstyle{Fill grey}=[-, fill={rgb,255: red,166; green,166; blue,166}]
\tikzstyle{Thick}=[-, thick]
\tikzstyle{Arrow}=[->]
\tikzstyle{red line}=[-, fill=none, draw=red]
\tikzstyle{Blue line}=[-, fill=none, draw=blue]
\tikzstyle{Gray line}=[-, fill=none, draw=gray]
\tikzstyle{Thick arrow}=[thick, ->]
\tikzstyle{Fill pink}=[-, fill={rgb,255: red,255; green,140; blue,140}]
\tikzstyle{dashed arrow}=[->, dashed, dash pattern=on 1mm off 0.5mm]
\tikzstyle{arrow}=[->, very thick, draw=red]
\tikzstyle{dashes red}=[-, dashed, dash pattern=on 1mm off 1mm, fill=none, draw=red]
\tikzstyle{dashes blue}=[-, dashed, dash pattern=on 1mm off 1mm, fill=none, draw=blue]
\tikzstyle{Thick dashed arrow}=[->, thick, dashed, dash pattern=on 1mm off 0.5mm]
\tikzstyle{Fill green}=[-, fill={rgb,255: red,70; green,176; blue,20}, draw=none]
\tikzstyle{Fill real red}=[-, fill=red, draw=none]
\tikzstyle{fill grey borderless}=[-, draw=none, fill={rgb,255: red,166; green,166; blue,166}]

%% file: general_protocol.tikz
\begin{tikzpicture}
	\begin{pgfonlayer}{nodelayer}
		\node [style=none] (0) at (-1.5, 3.5) {};
		\node [style=none] (1) at (1.5, 3.5) {};
		\node [style=none] (2) at (-1.5, -2.5) {};
		\node [style=none] (3) at (1.5, -2.5) {};
		\node [style=none] (4) at (3, 2.875) {};
		\node [style=none] (5) at (3, 2.125) {};
		\node [style=none] (6) at (3.625, 2.5) {};
		\node [style=none] (7) at (4.25, 3) {};
		\node [style=none] (8) at (5, 2.75) {};
		\node [style=none] (9) at (0, 4.5) {Interferometer};
		\node [style=none] (10) at (1.5, 2.5) {};
		\node [style=none] (11) at (2.5, 2.5) {};
		\node [style=none] (12) at (0, 0.5) {$\hat U$};
		\node [style=none] (13) at (1.5, 1.5) {};
		\node [style=none] (14) at (2.5, 1.5) {};
		\node [style=none] (15) at (1.5, -1.5) {};
		\node [style=none] (16) at (2.5, -1.5) {};
		\node [style=none] (17) at (1.5, -0.5) {};
		\node [style=none] (18) at (2.5, -0.5) {};
		\node [style=none] (19) at (2, 0.75) {$\vdots$};
		\node [style=none] (20) at (3, 1.875) {};
		\node [style=none] (21) at (3, 1.125) {};
		\node [style=none] (22) at (3.625, 1.5) {};
		\node [style=none] (23) at (4.5, 1.25) {};
		\node [style=none] (24) at (5, 1.75) {};
		\node [style=none] (25) at (3, -0.125) {};
		\node [style=none] (26) at (3, -0.875) {};
		\node [style=none] (27) at (3.625, -0.5) {};
		\node [style=none] (28) at (4.25, -0.25) {};
		\node [style=none] (29) at (5, 0.25) {};
		\node [style=none] (30) at (3, -1.125) {};
		\node [style=none] (31) at (3, -1.875) {};
		\node [style=none] (32) at (3.625, -1.5) {};
		\node [style=none] (33) at (4.25, -1.5) {};
		\node [style=none] (34) at (5, -1) {};
		\node [style=none] (35) at (5, 3.5) {};
		\node [style=none] (36) at (7, 3.5) {};
		\node [style=none] (37) at (5, -2.5) {};
		\node [style=none] (38) at (7, -2.5) {};
		\node [style=none] (39) at (6, 4.5) {Estimation};
		\node [style=none] (40) at (-3, 2.5) {};
		\node [style=none] (41) at (-1.5, 2.5) {};
		\node [style=none] (42) at (-3, 1.5) {};
		\node [style=none] (43) at (-1.5, 1.5) {};
		\node [style=none] (44) at (-3, -1.5) {};
		\node [style=none] (45) at (-1.5, -1.5) {};
		\node [style=none] (46) at (-3, -0.5) {};
		\node [style=none] (47) at (-1.5, -0.5) {};
		\node [style=none] (48) at (-2.25, 0.75) {$\vdots$};
		\node [style=none] (49) at (-5, 3) {};
		\node [style=none] (50) at (-3, 3) {};
		\node [style=none] (51) at (-3.25, -2.25) {};
		\node [style=none] (52) at (-3, -2) {};
		\node [style=none] (53) at (-4.75, 3.25) {};
		\node [style=none] (54) at (-3.25, 3.25) {};
		\node [style=none] (55) at (-5, -2) {};
		\node [style=none] (56) at (-4.75, -2.25) {};
		\node [style=none] (57) at (-4, 1.75) {$\hat V(\theta)$};
		\node [style=none] (58) at (-4.25, 4.5) {Evolution};
		\node [style=none] (59) at (-4, 0.5) {$=$};
		\node [style=none] (60) at (-4, -0.75) {$e^{-i\hat H\theta}$};
		\node [style=none] (61) at (-8, 4.5) {Generation};
		\node [style=none] (62) at (-6.5, 2.5) {};
		\node [style=none] (63) at (-5, 2.5) {};
		\node [style=none] (64) at (-6.5, 1.5) {};
		\node [style=none] (65) at (-5, 1.5) {};
		\node [style=none] (66) at (-6.5, -1.5) {};
		\node [style=none] (67) at (-5, -1.5) {};
		\node [style=none] (68) at (-6.5, -0.5) {};
		\node [style=none] (69) at (-5, -0.5) {};
		\node [style=none] (70) at (-5.75, 0.75) {$\vdots$};
		\node [style=Photon] (71) at (-8, 3) {};
		\node [style=Photon] (72) at (-7.5, 2.5) {};
		\node [style=Photon] (73) at (-7, 3) {};
		\node [style=Photon] (74) at (-7.75, 1.5) {};
		\node [style=Photon] (75) at (-7, 1.5) {};
		\node [style=Photon] (76) at (-7.25, -0.25) {};
		\node [style=Photon] (77) at (-8, -1.425) {};
		\node [style=Photon] (78) at (-7, -1.5) {};
		\node [style=Photon] (79) at (-7.5, -2) {};
		\node [style=Photon] (80) at (-8.5, -2) {};
		\node [style=none] (81) at (7.25, 0.75) {};
		\node [style=none] (82) at (8.25, 0.75) {};
		\node [style=none] (83) at (8.25, 1.25) {};
		\node [style=none] (84) at (8.75, 0.5) {};
		\node [style=none] (85) at (8.25, 0.25) {};
		\node [style=none] (86) at (7.25, 0.25) {};
		\node [style=none] (87) at (8.25, -0.25) {};
		\node [style=none] (88) at (9.25, 0.5) {$\theta$};
		\node [style=none] (89) at (5.925, 1.8) {};
		\node [style=none] (90) at (6.25, 1.675) {};
		\node [style=none] (91) at (5.925, 1.525) {};
		\node [style=none] (92) at (6.075, 1.475) {};
		\node [style=none] (93) at (6.4, 1.525) {};
		\node [style=none] (94) at (6.25, 1.2) {};
		\node [style=none] (95) at (6.225, 1.375) {};
		\node [style=none] (96) at (5.675, 1.8) {};
		\node [style=none] (97) at (6.525, 0.95) {};
		\node [style=none] (98) at (6.4, 0.625) {};
		\node [style=none] (99) at (6.25, 0.95) {};
		\node [style=none] (100) at (6.2, 0.8) {};
		\node [style=none] (101) at (6.25, 0.475) {};
		\node [style=none] (102) at (5.925, 0.625) {};
		\node [style=none] (103) at (6.1, 0.65) {};
		\node [style=none] (104) at (6.525, 1.2) {};
		\node [style=none] (105) at (5.675, 0.35) {};
		\node [style=none] (106) at (5.35, 0.475) {};
		\node [style=none] (107) at (5.675, 0.625) {};
		\node [style=none] (108) at (5.525, 0.675) {};
		\node [style=none] (109) at (5.2, 0.625) {};
		\node [style=none] (110) at (5.35, 0.95) {};
		\node [style=none] (111) at (5.375, 0.775) {};
		\node [style=none] (112) at (5.925, 0.35) {};
		\node [style=none] (113) at (5.075, 1.2) {};
		\node [style=none] (114) at (5.2, 1.525) {};
		\node [style=none] (115) at (5.35, 1.2) {};
		\node [style=none] (116) at (5.4, 1.35) {};
		\node [style=none] (117) at (5.35, 1.675) {};
		\node [style=none] (118) at (5.675, 1.525) {};
		\node [style=none] (119) at (5.5, 1.5) {};
		\node [style=none] (120) at (5.075, 0.95) {};
		\node [style=none] (121) at (6.275, 0.375) {};
		\node [style=none] (122) at (6.6, 0.2) {};
		\node [style=none] (123) at (6.275, 0.1) {};
		\node [style=none] (124) at (6.425, 0) {};
		\node [style=none] (125) at (6.75, 0.05) {};
		\node [style=none] (126) at (6.65, -0.275) {};
		\node [style=none] (127) at (6.575, -0.1) {};
		\node [style=none] (128) at (6.025, 0.375) {};
		\node [style=none] (129) at (6.925, -0.525) {};
		\node [style=none] (130) at (6.75, -0.85) {};
		\node [style=none] (131) at (6.65, -0.525) {};
		\node [style=none] (132) at (6.55, -0.675) {};
		\node [style=none] (133) at (6.6, -1) {};
		\node [style=none] (134) at (6.275, -0.9) {};
		\node [style=none] (135) at (6.45, -0.825) {};
		\node [style=none] (136) at (6.925, -0.275) {};
		\node [style=none] (137) at (6.025, -1.175) {};
		\node [style=none] (138) at (5.7, -1) {};
		\node [style=none] (139) at (6.025, -0.9) {};
		\node [style=none] (140) at (5.875, -0.8) {};
		\node [style=none] (141) at (5.55, -0.85) {};
		\node [style=none] (142) at (5.65, -0.525) {};
		\node [style=none] (143) at (5.725, -0.7) {};
		\node [style=none] (144) at (6.275, -1.175) {};
		\node [style=none] (145) at (5.375, -0.275) {};
		\node [style=none] (146) at (5.55, 0.05) {};
		\node [style=none] (147) at (5.65, -0.275) {};
		\node [style=none] (148) at (5.75, -0.125) {};
		\node [style=none] (149) at (5.7, 0.2) {};
		\node [style=none] (150) at (6.025, 0.1) {};
		\node [style=none] (151) at (5.85, 0.025) {};
		\node [style=none] (152) at (5.375, -0.525) {};
		\node [style=none] (153) at (-9.25, 0.5) {\scalebox{1}[8]{$\{$}};
		\node [style=none] (154) at (-10.25, 0.5) {$\ket{\psi}$};
	\end{pgfonlayer}
	\begin{pgfonlayer}{edgelayer}
		\draw [style=Fill grey] (2.center)
			 to (0.center)
			 to (1.center)
			 to (3.center)
			 to cycle;
		\draw [style=Fill pink] (5.center)
			 to (4.center)
			 to [in=90, out=0] (6.center)
			 to [in=0, out=-90] cycle;
		\draw [in=-135, out=0] (6.center) to (7.center);
		\draw [in=180, out=45] (7.center) to (8.center);
		\draw (10.center) to (11.center);
		\draw (13.center) to (14.center);
		\draw (15.center) to (16.center);
		\draw (17.center) to (18.center);
		\draw [style=Fill pink] (21.center)
			 to (20.center)
			 to [in=90, out=0] (22.center)
			 to [in=0, out=-90] cycle;
		\draw [in=-135, out=0] (22.center) to (23.center);
		\draw [in=180, out=45] (23.center) to (24.center);
		\draw [style=Fill pink] (26.center)
			 to (25.center)
			 to [in=90, out=0] (27.center)
			 to [in=0, out=-90] cycle;
		\draw [in=-135, out=0] (27.center) to (28.center);
		\draw [in=180, out=45] (28.center) to (29.center);
		\draw [style=Fill pink] (31.center)
			 to (30.center)
			 to [in=90, out=0] (32.center)
			 to [in=0, out=-90] cycle;
		\draw [in=-135, out=0] (32.center) to (33.center);
		\draw [in=180, out=45] (33.center) to (34.center);
		\draw [style=Fill grey] (35.center) to (36.center);
		\draw [style=Fill grey] (36.center) to (38.center);
		\draw [style=Fill grey] (38.center) to (37.center);
		\draw [style=Fill grey] (37.center) to (35.center);
		\draw (40.center) to (41.center);
		\draw (42.center) to (43.center);
		\draw (44.center) to (45.center);
		\draw (46.center) to (47.center);
		\draw (55.center)
			 to (49.center)
			 to [bend left=45, looseness=1.25] (53.center)
			 to (54.center)
			 to [bend left=45, looseness=1.25] (50.center)
			 to (52.center)
			 to [bend left=45, looseness=1.25] (51.center)
			 to (56.center)
			 to [bend left=45, looseness=1.25] cycle;
		\draw (62.center) to (63.center);
		\draw (64.center) to (65.center);
		\draw (66.center) to (67.center);
		\draw (68.center) to (69.center);
		\draw (81.center) to (82.center);
		\draw (82.center) to (83.center);
		\draw (83.center) to (84.center);
		\draw (84.center) to (87.center);
		\draw (87.center) to (85.center);
		\draw (85.center) to (86.center);
		\draw (86.center) to (81.center);
		\draw (89.center) to (91.center);
		\draw (91.center) to (92.center);
		\draw (90.center) to (92.center);
		\draw (94.center) to (95.center);
		\draw (93.center) to (95.center);
		\draw (93.center) to (90.center);
		\draw (96.center) to (89.center);
		\draw (97.center) to (99.center);
		\draw (99.center) to (100.center);
		\draw (98.center) to (100.center);
		\draw (102.center) to (103.center);
		\draw (101.center) to (103.center);
		\draw (101.center) to (98.center);
		\draw (104.center) to (97.center);
		\draw (105.center) to (107.center);
		\draw (107.center) to (108.center);
		\draw (106.center) to (108.center);
		\draw (110.center) to (111.center);
		\draw (109.center) to (111.center);
		\draw (109.center) to (106.center);
		\draw (112.center) to (105.center);
		\draw (113.center) to (115.center);
		\draw (115.center) to (116.center);
		\draw (114.center) to (116.center);
		\draw (118.center) to (119.center);
		\draw (117.center) to (119.center);
		\draw (117.center) to (114.center);
		\draw (120.center) to (113.center);
		\draw (96.center) to (118.center);
		\draw (104.center) to (94.center);
		\draw (112.center) to (102.center);
		\draw (120.center) to (110.center);
		\draw (121.center) to (123.center);
		\draw (123.center) to (124.center);
		\draw (122.center) to (124.center);
		\draw (126.center) to (127.center);
		\draw (125.center) to (127.center);
		\draw (125.center) to (122.center);
		\draw (128.center) to (121.center);
		\draw (129.center) to (131.center);
		\draw (131.center) to (132.center);
		\draw (130.center) to (132.center);
		\draw (134.center) to (135.center);
		\draw (133.center) to (135.center);
		\draw (133.center) to (130.center);
		\draw (136.center) to (129.center);
		\draw (137.center) to (139.center);
		\draw (139.center) to (140.center);
		\draw (138.center) to (140.center);
		\draw (142.center) to (143.center);
		\draw (141.center) to (143.center);
		\draw (141.center) to (138.center);
		\draw (144.center) to (137.center);
		\draw (145.center) to (147.center);
		\draw (147.center) to (148.center);
		\draw (146.center) to (148.center);
		\draw (150.center) to (151.center);
		\draw (149.center) to (151.center);
		\draw (149.center) to (146.center);
		\draw (152.center) to (145.center);
		\draw (128.center) to (150.center);
		\draw (136.center) to (126.center);
		\draw (144.center) to (134.center);
		\draw (152.center) to (142.center);
	\end{pgfonlayer}
\end{tikzpicture}

%% file: BS_sym_parity.tikz
\begin{tikzpicture}
	\begin{pgfonlayer}{nodelayer}
		\node [style=none] (0) at (-1.5, 0) {};
		\node [style=none] (1) at (0, 1.5) {};
		\node [style=none] (2) at (1.5, 0) {};
		\node [style=none] (3) at (0, -1.5) {};
		\node [style=none] (20) at (0, -2.25) {BS};
		\node [style=none] (32) at (-1, -1) {};
		\node [style=none] (33) at (-3, -3) {};
		\node [style=none] (34) at (-4, -3) {};
		\node [style=none] (35) at (-1, 1) {};
		\node [style=none] (36) at (-3, 3) {};
		\node [style=none] (37) at (-4, 3) {};
		\node [style=none] (38) at (-4.75, 3) {$\hat a_0^\dagger$};
		\node [style=none] (39) at (-4.75, -3) {$\hat a_1^\dagger$};
		\node [style=none] (40) at (4.75, 3) {$\hat a_0^\dagger$};
		\node [style=none] (41) at (4.75, -3) {$\hat a_1^\dagger$};
		\node [style=none] (43) at (1, -1) {};
		\node [style=none] (44) at (3, -3) {};
		\node [style=none] (45) at (4, -3) {};
		\node [style=none] (46) at (1, 1) {};
		\node [style=none] (47) at (3, 3) {};
		\node [style=none] (48) at (4, 3) {};
		\node [style=none] (51) at (-3, -0.75) {$\hat \Pi_1$};
		\node [style=none] (52) at (-3, 0.75) {$\hat S$};
		\node [style=none] (53) at (3, 0.75) {$\hat \Pi_1$};
		\node [style=none] (54) at (3, -0.75) {$\hat S$};
		\node [style=none] (55) at (-2.25, 0.75) {};
		\node [style=none] (56) at (2.25, 0.75) {};
		\node [style=none] (57) at (-2.25, -0.75) {};
		\node [style=none] (58) at (2.25, -0.75) {};
	\end{pgfonlayer}
	\begin{pgfonlayer}{edgelayer}
		\draw [style=Fill grey] (2.center)
			 to (3.center)
			 to (0.center)
			 to (1.center)
			 to cycle;
		\draw (0.center) to (2.center);
		\draw [style=dashed arrow] (34.center)
			 to (33.center)
			 to [in=-120, out=0, looseness=0.75] (32.center);
		\draw [style=dashed arrow] (37.center)
			 to (36.center)
			 to [in=120, out=0, looseness=0.75] (35.center);
		\draw [style=dashed arrow] (43.center)
			 to [in=180, out=-60, looseness=0.75] (44.center)
			 to (45.center);
		\draw [style=dashed arrow] (46.center)
			 to [in=180, out=60, looseness=0.75] (47.center)
			 to (48.center);
		\draw [style=arrow] (55.center) to (56.center);
		\draw [style=arrow] (57.center) to (58.center);
	\end{pgfonlayer}
\end{tikzpicture}

%% file: MZI_single_photon.tikz
\begin{tikzpicture}
	\begin{pgfonlayer}{nodelayer}
		\node [style=none] (0) at (-9.25, 0) {};
		\node [style=none] (1) at (-7.75, 1.5) {};
		\node [style=none] (2) at (-6.25, 0) {};
		\node [style=none] (3) at (-7.75, -1.5) {};
		\node [style=none] (4) at (-8.75, 1) {};
		\node [style=none] (5) at (-6.75, 1) {};
		\node [style=none] (6) at (-6.75, -1) {};
		\node [style=none] (7) at (-3.5, 3) {};
		\node [style=none] (8) at (-3.5, -3) {};
		\node [style=none] (9) at (-7.75, -2.5) {BS};
		\node [style=none] (10) at (-11.25, 3) {};
		\node [style=Photon] (11) at (-12, 3) {~};
		\node [style=Photon] (12) at (-12, -3) {~};
		\node [style=none] (13) at (-10, 1.5) {0};
		\node [style=none] (14) at (-8.75, -1) {};
		\node [style=none] (15) at (-11.25, -3) {};
		\node [style=none] (16) at (1, 0) {};
		\node [style=none] (17) at (2.5, 1.5) {};
		\node [style=none] (18) at (4, 0) {};
		\node [style=none] (19) at (2.5, -1.5) {};
		\node [style=none] (20) at (1.5, 1) {};
		\node [style=none] (21) at (3.5, 1) {};
		\node [style=none] (22) at (3.5, -1) {};
		\node [style=none] (23) at (5.5, 3) {};
		\node [style=none] (24) at (5.5, -3) {};
		\node [style=none] (25) at (2.5, -2.5) {BS};
		\node [style=none] (26) at (-1, 3) {};
		\node [style=none] (27) at (1.5, -1) {};
		\node [style=none] (28) at (-1, -3) {};
		\node [style=none] (29) at (-3.5, -3) {};
		\node [style=none] (30) at (5.5, 4) {};
		\node [style=none] (31) at (6.5, 3) {};
		\node [style=none] (32) at (6.5, -3) {};
		\node [style=none] (33) at (5.5, -4) {};
		\node [style=none] (34) at (6.5, 4) {};
		\node [style=none] (35) at (8.25, 4) {};
		\node [style=none] (36) at (8.25, 2.25) {};
		\node [style=none] (37) at (8.5, 1.25) {};
		\node [style=none] (38) at (6.5, -4) {};
		\node [style=none] (39) at (8.25, -3.75) {};
		\node [style=none] (40) at (9.5, -2.5) {};
		\node [style=none] (41) at (8.5, -1.25) {};
		\node [style=none] (42) at (12.5, 0) {$P_\text{c}$};
		\node [style=none] (43) at (6.5, 1.25) {};
		\node [style=none] (44) at (10.5, 1.25) {};
		\node [style=none] (45) at (6.5, -1.25) {};
		\node [style=none] (46) at (10.5, -1.25) {};
		\node [style=none] (47) at (8.5, 0.5) {Coincidence};
		\node [style=none] (48) at (8.5, -0.5) {detection};
		\node [style=none] (49) at (10.75, 0) {};
		\node [style=none] (50) at (12, 0) {};
		\node [style=none] (51) at (5, 1.5) {0};
		\node [style=none] (52) at (0.25, 1.5) {0};
		\node [style=none] (53) at (-5, 1.5) {0};
		\node [style=none] (54) at (-10, -1.5) {1};
		\node [style=none] (55) at (-5, -1.5) {1};
		\node [style=none] (56) at (0.25, -1.5) {1};
		\node [style=none] (57) at (5, -1.5) {1};
		\node [style=none] (58) at (-12.25, 0) {$\ket{\psi_\text{in}}$};
		\node [style=none] (59) at (-13, 4.5) {};
		\node [style=none] (60) at (-6.5, -5) {};
		\node [style=none] (61) at (-12.5, -5) {};
		\node [style=none] (62) at (-13, -4.5) {};
		\node [style=none] (63) at (-3.5, 5) {};
		\node [style=none] (64) at (-4, 4.5) {};
		\node [style=none] (65) at (-4, -4.5) {};
		\node [style=none] (66) at (-3.5, -5) {};
		\node [style=none] (67) at (-9.5, -5.75) {Preparation stage};
		\node [style=none] (68) at (4.5, -5.75) {HOM stage};
		\node [style=none] (69) at (-12.5, 5) {};
		\node [style=none] (70) at (-6.5, 5) {};
		\node [style=none] (71) at (-6, 4.5) {};
		\node [style=none] (72) at (-6, -4.5) {};
		\node [style=none] (73) at (12.5, 5) {};
		\node [style=none] (74) at (13, 4.5) {};
		\node [style=none] (75) at (13, -4.5) {};
		\node [style=none] (76) at (12.5, -5) {};
		\node [style=none] (77) at (-2.25, -1) {$e^{-i\theta\hat H}$};
		\node [style=none] (78) at (-3, 3.5) {};
		\node [style=none] (79) at (-1.5, 3.5) {};
		\node [style=none] (80) at (-1.5, -3.5) {};
		\node [style=none] (81) at (-3, -3.5) {};
		\node [style=none] (82) at (-3.25, 3.25) {};
		\node [style=none] (83) at (-1.25, 3.25) {};
		\node [style=none] (84) at (-1.25, -3.25) {};
		\node [style=none] (85) at (-3.25, -3.25) {};
		\node [style=none] (86) at (-2.25, 1) {$\hat V(\theta)$};
		\node [style=none] (87) at (-2.25, 0) {$=$};
		\node [style=none] (88) at (-5, 0) {$\ket{\psi_\text{eff}}$};
	\end{pgfonlayer}
	\begin{pgfonlayer}{edgelayer}
		\draw [style=Fill grey] (2.center)
			 to (3.center)
			 to (0.center)
			 to (1.center)
			 to cycle;
		\draw (0.center) to (2.center);
		\draw [style=dashed arrow, in=-180, out=60, looseness=0.75] (5.center) to (7.center);
		\draw [style=dashed arrow, in=-180, out=-60, looseness=0.75] (6.center) to (8.center);
		\draw [style=dashed arrow, in=120, out=0, looseness=0.75] (10.center) to (4.center);
		\draw [style=dashed arrow, in=-120, out=0, looseness=0.75] (15.center) to (14.center);
		\draw [style=Fill grey] (17.center)
			 to (18.center)
			 to (19.center)
			 to (16.center)
			 to cycle;
		\draw (16.center) to (18.center);
		\draw [style=dashed arrow] (21.center) to (23.center);
		\draw [style=dashed arrow] (22.center) to (24.center);
		\draw [style=dashed arrow, in=120, out=0, looseness=0.75] (26.center) to (20.center);
		\draw [style=dashed arrow, in=-120, out=0, looseness=0.75] (28.center) to (27.center);
		\draw [style=Fill red] (31.center)
			 to (30.center)
			 to [bend left=90, looseness=1.75] cycle;
		\draw [style=Fill red] (33.center)
			 to (32.center)
			 to [bend left=90, looseness=1.75] cycle;
		\draw [in=165, out=45, looseness=1.75] (34.center) to (35.center);
		\draw [in=60, out=-15, looseness=1.25] (35.center) to (36.center);
		\draw [in=90, out=-105] (36.center) to (37.center);
		\draw [in=165, out=-45] (38.center) to (39.center);
		\draw [in=-60, out=-15, looseness=1.50] (39.center) to (40.center);
		\draw [in=-90, out=120, looseness=1.25] (40.center) to (41.center);
		\draw (43.center) to (44.center);
		\draw (44.center) to (46.center);
		\draw (46.center) to (45.center);
		\draw (45.center) to (43.center);
		\draw [style=Thick arrow] (49.center) to (50.center);
		\draw [style=dashes red] (71.center)
			 to (72.center)
			 to [bend left=45, looseness=1.25] (60.center)
			 to (61.center)
			 to [bend right=315, looseness=1.25] (62.center)
			 to (59.center)
			 to [bend left=45, looseness=1.25] (69.center)
			 to (70.center)
			 to [bend left=45, looseness=1.25] cycle;
		\draw [style=dashes blue] (75.center)
			 to [bend left=45, looseness=1.25] (76.center)
			 to (66.center)
			 to [bend left=45, looseness=1.25] (65.center)
			 to (64.center)
			 to [bend left=45, looseness=1.25] (63.center)
			 to (73.center)
			 to [bend left=45, looseness=1.25] (74.center)
			 to cycle;
		\draw (80.center)
			 to (81.center)
			 to [bend right=315] (85.center)
			 to (82.center)
			 to [bend left=45] (78.center)
			 to (79.center)
			 to [bend left=45] (83.center)
			 to (84.center)
			 to [bend left=45] cycle;
	\end{pgfonlayer}
\end{tikzpicture}

%% file: Double_sources.tikz
\begin{tikzpicture}
	\begin{pgfonlayer}{nodelayer}
		\node [style=Photon] (13) at (10.25, -2.625) {~};
		\node [style=none] (17) at (13.25, 0) {};
		\node [style=none] (18) at (14.75, 1.5) {};
		\node [style=none] (19) at (16.25, 0) {};
		\node [style=none] (20) at (14.75, -1.5) {};
		\node [style=none] (21) at (13.75, 1) {};
		\node [style=none] (22) at (15.75, 1) {};
		\node [style=none] (23) at (15.75, -1) {};
		\node [style=none] (24) at (17.75, 3) {};
		\node [style=none] (25) at (17.75, -3) {};
		\node [style=none] (26) at (14.75, -2.5) {BS};
		\node [style=none] (27) at (11.25, 3) {};
		\node [style=none] (28) at (13.75, -1) {};
		\node [style=none] (29) at (11.25, -3) {};
		\node [style=none] (31) at (17.75, 4) {};
		\node [style=none] (32) at (18.75, 3) {};
		\node [style=none] (33) at (18.75, -3) {};
		\node [style=none] (34) at (17.75, -4) {};
		\node [style=none] (35) at (18.75, 4) {};
		\node [style=none] (36) at (20.5, 4) {};
		\node [style=none] (37) at (20.5, 2.25) {};
		\node [style=none] (38) at (20.75, 1.25) {};
		\node [style=none] (39) at (18.75, -4) {};
		\node [style=none] (40) at (20.5, -3.75) {};
		\node [style=none] (41) at (21.75, -2.5) {};
		\node [style=none] (42) at (20.75, -1.25) {};
		\node [style=none] (43) at (24.75, 0) {$P_\text{c}$};
		\node [style=none] (44) at (18.75, 1.25) {};
		\node [style=none] (45) at (22.75, 1.25) {};
		\node [style=none] (46) at (18.75, -1.25) {};
		\node [style=none] (47) at (22.75, -1.25) {};
		\node [style=none] (48) at (20.75, 0.5) {Coincidence};
		\node [style=none] (49) at (20.75, -0.5) {detection};
		\node [style=none] (50) at (23, 0) {};
		\node [style=none] (51) at (24.25, 0) {};
		\node [style=none] (52) at (17.25, 1.5) {0};
		\node [style=none] (53) at (12.5, 1.5) {0};
		\node [style=none] (57) at (12.5, -1.5) {1};
		\node [style=none] (58) at (17.25, -1.5) {1};
		\node [style=none] (59) at (9.25, 0) {$\ket{\psi_\text{in}}$};
		\node [style=Photon] (60) at (10.25, -3.375) {~};
		\node [style=Photon] (61) at (10.25, 3.375) {~};
		\node [style=Photon] (62) at (10.25, 2.625) {~};
		\node [style=none] (63) at (5, 3.875) {};
		\node [style=none] (64) at (5, 2.125) {};
		\node [style=none] (65) at (8, 2.125) {};
		\node [style=none] (66) at (8, 3.875) {};
		\node [style=none] (67) at (8, 3.375) {};
		\node [style=none] (68) at (8.5, 3.375) {};
		\node [style=none] (69) at (8, 2.625) {};
		\node [style=none] (70) at (8.5, 2.625) {};
		\node [style=none] (71) at (5, -2.125) {};
		\node [style=none] (72) at (5, -3.875) {};
		\node [style=none] (73) at (8, -3.875) {};
		\node [style=none] (74) at (8, -2.125) {};
		\node [style=none] (75) at (8, -2.625) {};
		\node [style=none] (76) at (8.5, -2.625) {};
		\node [style=none] (77) at (8, -3.375) {};
		\node [style=none] (78) at (8.5, -3.375) {};
	\end{pgfonlayer}
	\begin{pgfonlayer}{edgelayer}
		\draw [style=Fill grey] (18.center)
			 to (19.center)
			 to (20.center)
			 to (17.center)
			 to cycle;
		\draw (17.center) to (19.center);
		\draw [style=dashed arrow] (22.center) to (24.center);
		\draw [style=dashed arrow] (23.center) to (25.center);
		\draw [style=dashed arrow, in=120, out=0, looseness=0.75] (27.center) to (21.center);
		\draw [style=dashed arrow, in=-120, out=0, looseness=0.75] (29.center) to (28.center);
		\draw [style=Fill red] (32.center)
			 to (31.center)
			 to [bend left=90, looseness=1.75] cycle;
		\draw [style=Fill red] (34.center)
			 to (33.center)
			 to [bend left=90, looseness=1.75] cycle;
		\draw [in=165, out=45, looseness=1.75] (35.center) to (36.center);
		\draw [in=60, out=-15, looseness=1.25] (36.center) to (37.center);
		\draw [in=90, out=-105] (37.center) to (38.center);
		\draw [in=165, out=-45] (39.center) to (40.center);
		\draw [in=-60, out=-15, looseness=1.50] (40.center) to (41.center);
		\draw [in=-90, out=120, looseness=1.25] (41.center) to (42.center);
		\draw (44.center) to (45.center);
		\draw (45.center) to (47.center);
		\draw (47.center) to (46.center);
		\draw (46.center) to (44.center);
		\draw [style=Thick arrow] (50.center) to (51.center);
		\draw [style=Fill grey] (65.center)
			 to (64.center)
			 to (63.center)
			 to (66.center)
			 to cycle;
		\draw [style=Fill pink] (69.center)
			 to (67.center)
			 to (68.center)
			 to (70.center)
			 to cycle;
		\draw [style=Fill grey] (73.center)
			 to (72.center)
			 to (71.center)
			 to (74.center)
			 to cycle;
		\draw [style=Fill pink] (77.center)
			 to (75.center)
			 to (76.center)
			 to (78.center)
			 to cycle;
	\end{pgfonlayer}
\end{tikzpicture}

%% file: Double_sources_metro.tikz
\begin{tikzpicture}
	\begin{pgfonlayer}{nodelayer}
		\node [style=Photon] (13) at (5.25, -2.625) {~};
		\node [style=none] (17) at (13.25, 0) {};
		\node [style=none] (18) at (14.75, 1.5) {};
		\node [style=none] (19) at (16.25, 0) {};
		\node [style=none] (20) at (14.75, -1.5) {};
		\node [style=none] (21) at (13.75, 1) {};
		\node [style=none] (22) at (15.75, 1) {};
		\node [style=none] (23) at (15.75, -1) {};
		\node [style=none] (24) at (17.75, 3) {};
		\node [style=none] (25) at (17.75, -3) {};
		\node [style=none] (26) at (14.75, -2.5) {BS};
		\node [style=none] (27) at (11.25, 3) {};
		\node [style=none] (28) at (13.75, -1) {};
		\node [style=none] (29) at (11.25, -3) {};
		\node [style=none] (31) at (17.75, 4) {};
		\node [style=none] (32) at (18.75, 3) {};
		\node [style=none] (33) at (18.75, -3) {};
		\node [style=none] (34) at (17.75, -4) {};
		\node [style=none] (35) at (18.75, 4) {};
		\node [style=none] (36) at (20.5, 4) {};
		\node [style=none] (37) at (20.5, 2.25) {};
		\node [style=none] (38) at (20.75, 1.25) {};
		\node [style=none] (39) at (18.75, -4) {};
		\node [style=none] (40) at (20.5, -3.75) {};
		\node [style=none] (41) at (21.75, -2.5) {};
		\node [style=none] (42) at (20.75, -1.25) {};
		\node [style=none] (43) at (24.75, 0) {$P_\text{c}$};
		\node [style=none] (44) at (18.75, 1.25) {};
		\node [style=none] (45) at (22.75, 1.25) {};
		\node [style=none] (46) at (18.75, -1.25) {};
		\node [style=none] (47) at (22.75, -1.25) {};
		\node [style=none] (48) at (20.75, 0.5) {Coincidence};
		\node [style=none] (49) at (20.75, -0.5) {detection};
		\node [style=none] (50) at (23, 0) {};
		\node [style=none] (51) at (24.25, 0) {};
		\node [style=none] (52) at (17.25, 1.5) {0};
		\node [style=none] (53) at (12.5, 1.5) {0};
		\node [style=none] (57) at (12.5, -1.5) {1};
		\node [style=none] (58) at (17.25, -1.5) {1};
		\node [style=none] (59) at (4.25, 0) {$\ket{\psi_\text{in}}$};
		\node [style=Photon] (60) at (5.25, -3.375) {~};
		\node [style=Photon] (61) at (5.25, 3.375) {~};
		\node [style=Photon] (62) at (5.25, 2.625) {~};
		\node [style=none] (63) at (0, 3.875) {};
		\node [style=none] (64) at (0, 2.125) {};
		\node [style=none] (65) at (3, 2.125) {};
		\node [style=none] (66) at (3, 3.875) {};
		\node [style=none] (67) at (3, 3.375) {};
		\node [style=none] (68) at (3.5, 3.375) {};
		\node [style=none] (69) at (3, 2.625) {};
		\node [style=none] (70) at (3.5, 2.625) {};
		\node [style=none] (71) at (0, -2.125) {};
		\node [style=none] (72) at (0, -3.875) {};
		\node [style=none] (73) at (3, -3.875) {};
		\node [style=none] (74) at (3, -2.125) {};
		\node [style=none] (75) at (3, -2.625) {};
		\node [style=none] (76) at (3.5, -2.625) {};
		\node [style=none] (77) at (3, -3.375) {};
		\node [style=none] (78) at (3.5, -3.375) {};
		\node [style=none] (84) at (6.25, 3) {};
		\node [style=none] (85) at (6.25, -3) {};
		\node [style=none] (86) at (8.75, 3) {};
		\node [style=none] (87) at (8.75, -3) {};
		\node [style=none] (99) at (7.5, 2.5) {0};
		\node [style=none] (101) at (7.5, -2.5) {1};
		\node [style=none] (117) at (10, -1) {$e^{-i\theta\hat H}$};
		\node [style=none] (118) at (9.25, 3.5) {};
		\node [style=none] (119) at (10.75, 3.5) {};
		\node [style=none] (120) at (10.75, -3.5) {};
		\node [style=none] (121) at (9.25, -3.5) {};
		\node [style=none] (122) at (9, 3.25) {};
		\node [style=none] (123) at (11, 3.25) {};
		\node [style=none] (124) at (11, -3.25) {};
		\node [style=none] (125) at (9, -3.25) {};
		\node [style=none] (126) at (10, 1) {$\hat V(\theta)$};
		\node [style=none] (127) at (10, 0) {$=$};
	\end{pgfonlayer}
	\begin{pgfonlayer}{edgelayer}
		\draw [style=Fill grey] (18.center)
			 to (19.center)
			 to (20.center)
			 to (17.center)
			 to cycle;
		\draw (17.center) to (19.center);
		\draw [style=dashed arrow] (22.center) to (24.center);
		\draw [style=dashed arrow] (23.center) to (25.center);
		\draw [style=dashed arrow, in=120, out=0, looseness=0.75] (27.center) to (21.center);
		\draw [style=dashed arrow, in=-120, out=0, looseness=0.75] (29.center) to (28.center);
		\draw [style=Fill red] (32.center)
			 to (31.center)
			 to [bend left=90, looseness=1.75] cycle;
		\draw [style=Fill red] (34.center)
			 to (33.center)
			 to [bend left=90, looseness=1.75] cycle;
		\draw [in=165, out=45, looseness=1.75] (35.center) to (36.center);
		\draw [in=60, out=-15, looseness=1.25] (36.center) to (37.center);
		\draw [in=90, out=-105] (37.center) to (38.center);
		\draw [in=165, out=-45] (39.center) to (40.center);
		\draw [in=-60, out=-15, looseness=1.50] (40.center) to (41.center);
		\draw [in=-90, out=120, looseness=1.25] (41.center) to (42.center);
		\draw (44.center) to (45.center);
		\draw (45.center) to (47.center);
		\draw (47.center) to (46.center);
		\draw (46.center) to (44.center);
		\draw [style=Thick arrow] (50.center) to (51.center);
		\draw [style=Fill grey] (65.center)
			 to (64.center)
			 to (63.center)
			 to (66.center)
			 to cycle;
		\draw [style=Fill pink] (69.center)
			 to (67.center)
			 to (68.center)
			 to (70.center)
			 to cycle;
		\draw [style=Fill grey] (73.center)
			 to (72.center)
			 to (71.center)
			 to (74.center)
			 to cycle;
		\draw [style=Fill pink] (77.center)
			 to (75.center)
			 to (76.center)
			 to (78.center)
			 to cycle;
		\draw [style=dashed arrow] (84.center) to (86.center);
		\draw [style=dashed arrow] (85.center) to (87.center);
		\draw (120.center)
			 to (121.center)
			 to [bend right=315] (125.center)
			 to (122.center)
			 to [bend left=45] (118.center)
			 to (119.center)
			 to [bend left=45] (123.center)
			 to (124.center)
			 to [bend left=45] cycle;
	\end{pgfonlayer}
\end{tikzpicture}

%% file: refs.bib
@article{alsing_examination_2025,
  title = {Examination of the Extended {{Hong-Ou-Mandel}} Effect and Considerations for Experimental Detection},
  author = {Alsing, Paul M. and Birrittella, Richard J.},
  year = 2025,
  journal = {Physical Review A},
  volume = {111},
  number = {3},
  pages = {032616},
  publisher = {American Physical Society},
  doi = {10.1103/PhysRevA.111.032616}
}

@article{alsing_extending_2022,
  title = {Extending the {{Hong-Ou-Mandel}} Effect: {{The}} Power of Nonclassicality},
  shorttitle = {Extending the {{Hong-Ou-Mandel}} Effect},
  author = {Alsing, Paul M. and Birrittella, Richard J. and Gerry, Christopher C. and Mimih, Jihane and Knight, Peter L.},
  year = 2022,
  journal = {Physical Review A},
  volume = {105},
  number = {1},
  pages = {013712},
  publisher = {American Physical Society},
  doi = {10.1103/PhysRevA.105.013712}
}

@misc{alsing_hong-ou-mandel_2024,
  title = {The {{Hong-Ou-Mandel}} Effect Is Really Odd},
  author = {Alsing, Paul M. and Birrittella, Richard J. and Gerry, Christopher C. and Mimih, Jihane and Knight, Peter L.},
  year = 2024,
  number = {arXiv:2410.11800},
  eprint = {2410.11800},
  publisher = {arXiv},
  doi = {10.48550/arXiv.2410.11800},
  archiveprefix = {arXiv}
}

@article{birrittella_parity_2021,
  title = {The Parity Operator: {{Applications}} in Quantum Metrology},
  author = {Birrittella, Richard J. and Alsing, Paul M. and Gerry, Christopher C.},
  year = 2021,
  journal = {AVS Quantum Science},
  volume = {3},
  number = {1},
  pages = {014701},
  issn = {2639-0213},
  doi = {10.1116/5.0026148}
}

@article{bouchard_two-photon_2020,
  title = {Two-Photon Interference: The {{Hong}}--{{Ou}}--{{Mandel}} Effect},
  shorttitle = {Two-Photon Interference},
  author = {Bouchard, Fr{\'e}d{\'e}ric and Sit, Alicia and Zhang, Yingwen and Fickler, Robert and Miatto, Filippo M and Yao, Yuan and Sciarrino, Fabio and Karimi, Ebrahim},
  year = 2020,
  journal = {Reports on Progress in Physics},
  volume = {84},
  number = {1},
  pages = {012402},
  publisher = {IOP Publishing},
  issn = {0034-4885},
  doi = {10.1088/1361-6633/abcd7a}
}

@article{boucher_toolbox_2015,
  title = {Toolbox for Continuous-Variable Entanglement Production and Measurement Using Spontaneous Parametric down-Conversion},
  author = {Boucher, G. and Douce, T. and Bresteau, D. and Walborn, S. P. and Keller, A. and Coudreau, T. and Ducci, S. and Milman, P.},
  year = 2015,
  journal = {Physical Review A},
  volume = {92},
  number = {2},
  pages = {023804},
  publisher = {American Physical Society},
  doi = {10.1103/PhysRevA.92.023804}
}

@article{braunstein_statistical_1994,
  title = {Statistical Distance and the Geometry of Quantum States},
  author = {Braunstein, Samuel L. and Caves, Carlton M.},
  year = 1994,
  journal = {Physical Review Letters},
  volume = {72},
  number = {22},
  pages = {3439--3443},
  publisher = {American Physical Society},
  doi = {10.1103/PhysRevLett.72.3439}
}

@article{campos_quantum-mechanical_1989,
  title = {Quantum-Mechanical Lossless Beam Splitter: {{SU}}(2) Symmetry and Photon Statistics},
  shorttitle = {Quantum-Mechanical Lossless Beam Splitter},
  author = {Campos, Richard A. and Saleh, Bahaa E. A. and Teich, Malvin C.},
  year = 1989,
  journal = {Physical Review A},
  volume = {40},
  number = {3},
  pages = {1371--1384},
  publisher = {American Physical Society},
  doi = {10.1103/PhysRevA.40.1371}
}

@article{chapmanOnChipQuantumInterference2025,
  title = {On-{{Chip Quantum Interference}} between {{Independent Lithium Niobate-on-Insulator Photon-Pair Sources}}},
  author = {Chapman, Robert J. and Kuttner, Tristan and Kellner, Jost and Sabatti, Alessandra and Maeder, Andreas and Finco, Giovanni and Kaufmann, Fabian and Grange, Rachel},
  year = 2025,
  journal = {Physical Review Letters},
  volume = {134},
  number = {22},
  pages = {223602},
  publisher = {American Physical Society},
  doi = {10.1103/n2y3-2bmz}
}

@article{chen_deterministic_2007,
  title = {Deterministic Quantum Splitter Based on Time-Reversed {{Hong-Ou-Mandel}} Interference},
  author = {Chen, Jun and Lee, Kim Fook and Kumar, Prem},
  year = 2007,
  journal = {Physical Review A},
  volume = {76},
  number = {3},
  pages = {031804},
  publisher = {American Physical Society},
  doi = {10.1103/PhysRevA.76.031804}
}

@article{chen_polarization_2018,
  title = {Polarization {{Entanglement}} by {{Time-Reversed Hong-Ou-Mandel Interference}}},
  author = {Chen, Yuanyuan and Ecker, Sebastian and Wengerowsky, S{\"o}ren and Bulla, Lukas and Joshi, Siddarth Koduru and Steinlechner, Fabian and Ursin, Rupert},
  year = 2018,
  journal = {Physical Review Letters},
  volume = {121},
  number = {20},
  pages = {200502},
  publisher = {American Physical Society},
  doi = {10.1103/PhysRevLett.121.200502}
}

@article{crespi_suppression_2015,
  title = {Suppression Laws for Multiparticle Interference in {{Sylvester}} Interferometers},
  author = {Crespi, Andrea},
  year = 2015,
  journal = {Physical Review A},
  volume = {91},
  number = {1},
  pages = {013811},
  publisher = {American Physical Society},
  doi = {10.1103/PhysRevA.91.013811}
}

@misc{descamps_phd_2026,
  title = {{{PhD}} Thesis: {{Modes}}, {{States}}, and {{Symmetries}} in Quantum {{Optics}} for Quantum {{Information}} and {{Metrology}}},
  shorttitle = {{{PhD}} Thesis},
  author = {Descamps, {\'E}loi},
  year = 2026,
  number = {arXiv:2607.26761},
  eprint = {2607.26761},
  primaryclass = {quant-ph},
  publisher = {arXiv},
  doi = {10.48550/arXiv.2607.26761},
  archiveprefix = {arXiv}
}

@article{descamps_role_2026,
  title = {Role of {{Symmetry}} in {{Generalized Hong-Ou-Mandel Interference}} and {{Quantum Metrology}}},
  author = {Descamps, {\'E}loi and Keller, Arne and Milman, P{\'e}rola},
  year = 2026,
  journal = {Physical Review Letters},
  volume = {136},
  number = {6},
  pages = {060807},
  publisher = {American Physical Society},
  doi = {10.1103/jy6g-jp7n}
}

@article{descamps_time-frequency_2023,
  title = {Time-Frequency Metrology with Two Single-Photon States: {{Phase-space}} Picture and the {{Hong-Ou-Mandel}} Interferometer},
  shorttitle = {Time-Frequency Metrology with Two Single-Photon States},
  author = {Descamps, {\'E}loi and Keller, Arne and Milman, P{\'e}rola},
  year = 2023,
  journal = {Physical Review A},
  volume = {108},
  number = {1},
  pages = {013707},
  publisher = {American Physical Society},
  doi = {10.1103/PhysRevA.108.013707}
}

@article{dorfman_hong-ou-mandel_2021,
  title = {Hong-{{Ou-Mandel}} Interferometry and Spectroscopy Using Entangled Photons},
  author = {Dorfman, Konstantin E. and Asban, Shahaf and Gu, Bing and Mukamel, Shaul},
  year = 2021,
  journal = {Communications Physics},
  volume = {4},
  number = {1},
  pages = {49},
  publisher = {Nature Publishing Group},
  issn = {2399-3650},
  doi = {10.1038/s42005-021-00542-2},
  copyright = {2021 The Author(s)}
}

@article{douce_direct_2013,
  title = {Direct Measurement of the Biphoton {{Wigner}} Function through Two-Photon Interference},
  author = {Douce, T. and Eckstein, A. and Walborn, S. P. and Khoury, A. Z. and Ducci, S. and Keller, A. and Coudreau, T. and Milman, P.},
  year = 2013,
  journal = {Scientific Reports},
  volume = {3},
  number = {1},
  pages = {3530},
  publisher = {Nature Publishing Group},
  issn = {2045-2322},
  doi = {10.1038/srep03530},
  copyright = {2013 The Author(s)}
}

@article{dowling_quantum_2008,
  title = {Quantum Optical Metrology -- the Lowdown on High-{{N00N}} States},
  author = {Dowling, Jonathan P.},
  year = 2008,
  journal = {Contemporary Physics},
  volume = {49},
  number = {2},
  pages = {125--143},
  publisher = {Taylor \& Francis},
  issn = {0010-7514},
  doi = {10.1080/00107510802091298}
}

@article{fabre_hongoumandel_2022,
  title = {The {{Hong}}--{{Ou}}--{{Mandel}} Experiment: From Photon Indistinguishability to Continuous-Variable Quantum Computing},
  shorttitle = {The {{Hong}}--{{Ou}}--{{Mandel}} Experiment},
  author = {Fabre, N. and Amanti, M. and Baboux, F. and Keller, A. and Ducci, S. and Milman, P.},
  year = 2022,
  journal = {The European Physical Journal D},
  volume = {76},
  number = {10},
  pages = {196},
  issn = {1434-6079},
  doi = {10.1140/epjd/s10053-022-00525-0}
}

@article{fabre_interferometric_2022,
  title = {Interferometric Signature of Different Spectral Symmetries of Biphoton States},
  author = {Fabre, N.},
  year = 2022,
  journal = {Physical Review A},
  volume = {105},
  number = {5},
  pages = {053716},
  publisher = {American Physical Society},
  doi = {10.1103/PhysRevA.105.053716}
}

@phdthesis{fabre_quantum_2020,
  type = {Theses},
  title = {Quantum Information in Time-Frequency Continuous Variables},
  author = {Fabre, Nicolas},
  year = 2020,
  number = {2020UNIP7044},
  url = {https://theses.hal.science/tel-03191301},
  school = {Universit\'e Paris Cit\'e}
}

@article{fabre_time_2022,
  title = {Time and Frequency as Quantum Continuous Variables},
  author = {Fabre, Nicolas and Keller, Arne and Milman, P{\'e}rola},
  year = 2022,
  journal = {Physical Review A},
  volume = {105},
  number = {5},
  pages = {052429},
  publisher = {American Physical Society},
  doi = {10.1103/PhysRevA.105.052429}
}

@article{garcia-escartin_swap_2013,
  title = {Swap Test and {{Hong-Ou-Mandel}} Effect Are Equivalent},
  author = {{Garcia-Escartin}, Juan Carlos and {Chamorro-Posada}, Pedro},
  year = 2013,
  journal = {Physical Review A},
  volume = {87},
  number = {5},
  pages = {052330},
  publisher = {American Physical Society},
  doi = {10.1103/PhysRevA.87.052330}
}

@article{giovannetti_advances_2011,
  title = {Advances in Quantum Metrology},
  author = {Giovannetti, Vittorio and Lloyd, Seth and Maccone, Lorenzo},
  year = 2011,
  journal = {Nature Photonics},
  volume = {5},
  number = {4},
  pages = {222--229},
  publisher = {Nature Publishing Group},
  issn = {1749-4893},
  doi = {10.1038/nphoton.2011.35},
  copyright = {2011 Springer Nature Limited}
}

@article{grice_spectral_1997,
  title = {Spectral Information and Distinguishability in Type-{{II}} down-Conversion with a Broadband Pump},
  author = {Grice, W. P. and Walmsley, I. A.},
  year = 1997,
  journal = {Physical Review A},
  volume = {56},
  number = {2},
  pages = {1627--1634},
  publisher = {American Physical Society},
  doi = {10.1103/PhysRevA.56.1627}
}

@article{holland_interferometric_1993,
  title = {Interferometric Detection of Optical Phase Shifts at the {{Heisenberg}} Limit},
  author = {Holland, M. J. and Burnett, K.},
  year = 1993,
  journal = {Physical Review Letters},
  volume = {71},
  number = {9},
  pages = {1355--1358},
  publisher = {American Physical Society},
  doi = {10.1103/PhysRevLett.71.1355}
}

@article{hong_measurement_1987,
  title = {Measurement of Subpicosecond Time Intervals between Two Photons by Interference},
  author = {Hong, C. K. and Ou, Z. Y. and Mandel, L.},
  year = 1987,
  journal = {Physical Review Letters},
  volume = {59},
  number = {18},
  pages = {2044--2046},
  publisher = {American Physical Society},
  doi = {10.1103/PhysRevLett.59.2044}
}

@article{jin_-chip_2014,
  title = {On-{{Chip Generation}} and {{Manipulation}} of {{Entangled Photons Based}} on {{Reconfigurable Lithium-Niobate Waveguide Circuits}}},
  author = {Jin, H. and Liu, F. M. and Xu, P. and Xia, J. L. and Zhong, M. L. and Yuan, Y. and Zhou, J. W. and Gong, Y. X. and Wang, W. and Zhu, S. N.},
  year = 2014,
  journal = {Physical Review Letters},
  volume = {113},
  number = {10},
  pages = {103601},
  publisher = {American Physical Society},
  doi = {10.1103/PhysRevLett.113.103601}
}

@article{jones_distinguishability_2023,
  title = {Distinguishability and Mixedness in Quantum Interference},
  author = {Jones, Alex E. and Kumar, Shreya and D'Aurelio, Simone and Bayerbach, Matthias and Menssen, Adrian J. and Barz, Stefanie},
  year = 2023,
  journal = {Physical Review A},
  volume = {108},
  number = {5},
  pages = {053701},
  publisher = {American Physical Society},
  doi = {10.1103/PhysRevA.108.053701}
}

@article{jordan_quantum_2022,
  title = {Quantum Metrology Timing Limits of the {{Hong-Ou-Mandel}} Interferometer and of General Two-Photon Measurements},
  author = {Jordan, Kyle M. and Abrahao, Raphael A. and Lundeen, Jeff S.},
  year = 2022,
  journal = {Physical Review A},
  volume = {106},
  number = {6},
  pages = {063715},
  publisher = {American Physical Society},
  doi = {10.1103/PhysRevA.106.063715}
}

@article{kim_quantum_2005,
  title = {Quantum Interference with Distinguishable Photons through Indistinguishable Pathways},
  author = {Kim, Yoon-Ho and Grice, Warren P.},
  year = 2005,
  journal = {JOSA B},
  volume = {22},
  number = {2},
  pages = {493--498},
  publisher = {Optica Publishing Group},
  issn = {1520-8540},
  doi = {10.1364/JOSAB.22.000493},
  copyright = {\copyright{} 2005 Optical Society of America}
}

@article{kok_linear_2007,
  title = {Linear Optical Quantum Computing with Photonic Qubits},
  author = {Kok, Pieter and Munro, W. J. and Nemoto, Kae and Ralph, T. C. and Dowling, Jonathan P. and Milburn, G. J.},
  year = 2007,
  journal = {Reviews of Modern Physics},
  volume = {79},
  number = {1},
  pages = {135--174},
  publisher = {American Physical Society},
  doi = {10.1103/RevModPhys.79.135}
}

@article{law_continuous_2000,
  title = {Continuous {{Frequency Entanglement}}: {{Effective Finite Hilbert Space}} and {{Entropy Control}}},
  shorttitle = {Continuous {{Frequency Entanglement}}},
  author = {Law, C. K. and Walmsley, I. A. and Eberly, J. H.},
  year = 2000,
  journal = {Physical Review Letters},
  volume = {84},
  number = {23},
  pages = {5304--5307},
  publisher = {American Physical Society},
  doi = {10.1103/PhysRevLett.84.5304}
}

@article{legero_time-resolved_2003,
  title = {Time-Resolved Two-Photon Quantum Interference},
  author = {Legero, T. and Wilk, T. and Kuhn, A. and Rempe, G.},
  year = 2003,
  journal = {Applied Physics B},
  volume = {77},
  number = {8},
  pages = {797--802},
  issn = {1432-0649},
  doi = {10.1007/s00340-003-1337-x}
}

@article{lim_generalized_2005,
  title = {Generalized {{Hong}}--{{Ou}}--{{Mandel}} Experiments with Bosons and Fermions},
  author = {Lim, Yuan Liang and Beige, Almut},
  year = 2005,
  journal = {New Journal of Physics},
  volume = {7},
  number = {1},
  pages = {155},
  issn = {1367-2630},
  doi = {10.1088/1367-2630/7/1/155}
}

@article{lyons_attosecond-resolution_2018,
  title = {Attosecond-{{Resolution Hong-Ou-Mandel Interferometry}}},
  author = {Lyons, Ashley and Knee, George C. and Bolduc, Eliot and Roger, Thomas and Leach, Jonathan and Gauger, Erik M. and Faccio, Daniele},
  year = 2018,
  journal = {Science Advances},
  volume = {4},
  number = {5},
  eprint = {1708.08351},
  primaryclass = {physics, physics:quant-ph},
  pages = {eaap9416},
  issn = {2375-2548},
  doi = {10.1126/sciadv.aap9416},
  archiveprefix = {arXiv}
}

@article{mandel_coherence_1991,
  title = {Coherence and Indistinguishability},
  author = {Mandel, L.},
  year = 1991,
  journal = {Optics Letters},
  volume = {16},
  number = {23},
  pages = {1882--1883},
  publisher = {Optica Publishing Group},
  issn = {1539-4794},
  doi = {10.1364/OL.16.001882},
  copyright = {\copyright{} 1991 Optical Society of America}
}

@article{marchildon_deterministic_2016,
  title = {Deterministic Separation of Arbitrary Photon Pair States in Integrated Quantum Circuits},
  author = {Marchildon, Ryan P. and Helmy, Amr S.},
  year = 2016,
  journal = {Laser \& Photonics Reviews},
  volume = {10},
  number = {2},
  pages = {245--256},
  issn = {1863-8899},
  doi = {10.1002/lpor.201500133},
  copyright = {\copyright{} 2016 by WILEY-VCH Verlag GmbH \& Co. KGaA, Weinheim}
}

@article{menssen_distinguishability_2017,
  title = {Distinguishability and {{Many-Particle Interference}}},
  author = {Menssen, Adrian J. and Jones, Alex E. and Metcalf, Benjamin J. and Tichy, Malte C. and Barz, Stefanie and Kolthammer, W. Steven and Walmsley, Ian A.},
  year = 2017,
  journal = {Physical Review Letters},
  volume = {118},
  number = {15},
  pages = {153603},
  publisher = {American Physical Society},
  doi = {10.1103/PhysRevLett.118.153603}
}

@article{mosley_heralded_2008,
  title = {Heralded {{Generation}} of {{Ultrafast Single Photons}} in {{Pure Quantum States}}},
  author = {Mosley, Peter J. and Lundeen, Jeff S. and Smith, Brian J. and Wasylczyk, Piotr and U'Ren, Alfred B. and Silberhorn, Christine and Walmsley, Ian A.},
  year = 2008,
  journal = {Physical Review Letters},
  volume = {100},
  number = {13},
  pages = {133601},
  publisher = {American Physical Society},
  doi = {10.1103/PhysRevLett.100.133601}
}

@article{ndagano_quantum_2022,
  title = {Quantum Microscopy Based on {{Hong}}--{{Ou}}--{{Mandel}} Interference},
  author = {Ndagano, Bienvenu and Defienne, Hugo and Branford, Dominic and Shah, Yash D. and Lyons, Ashley and Westerberg, Niclas and Gauger, Erik M. and Faccio, Daniele},
  year = 2022,
  journal = {Nature Photonics},
  volume = {16},
  number = {5},
  pages = {384--389},
  publisher = {Nature Publishing Group},
  issn = {1749-4893},
  doi = {10.1038/s41566-022-00980-6},
  copyright = {2022 The Author(s), under exclusive licence to Springer Nature Limited}
}

@article{ou_observation_1988,
  title = {Observation of {{Spatial Quantum Beating}} with {{Separated Photodetectors}}},
  author = {Ou, Z. Y. and Mandel, L.},
  year = 1988,
  journal = {Physical Review Letters},
  volume = {61},
  number = {1},
  pages = {54--57},
  publisher = {American Physical Society},
  doi = {10.1103/PhysRevLett.61.54}
}

@article{ou_quantum_1996,
  title = {Quantum Multi-Particle Interference Due to a Single-Photon State},
  author = {Ou, Z. Y.},
  year = 1996,
  journal = {Quantum and Semiclassical Optics: Journal of the European Optical Society Part B},
  volume = {8},
  number = {2},
  pages = {315},
  issn = {1355-5111},
  doi = {10.1088/1355-5111/8/2/001}
}

@article{ou_temporal_2006,
  title = {Temporal Distinguishability of an {{N-photon}} State and Its Characterization by Quantum Interference},
  author = {Ou, Z. Y.},
  year = 2006,
  journal = {Physical Review A},
  volume = {74},
  number = {6},
  pages = {063808},
  publisher = {American Physical Society},
  doi = {10.1103/PhysRevA.74.063808}
}

@article{paris_quantum_2009,
  title = {Quantum Estimation for Quantum Technology},
  author = {Paris, Matteo G. A.},
  year = 2009,
  journal = {International Journal of Quantum Information},
  volume = {07},
  number = {supp01},
  pages = {125--137},
  publisher = {World Scientific Publishing Co.},
  issn = {0219-7499},
  doi = {10.1142/S0219749909004839}
}

@article{pezze_mach-zehnder_2008,
  title = {Mach-{{Zehnder Interferometry}} at the {{Heisenberg Limit}} with {{Coherent}} and {{Squeezed-Vacuum Light}}},
  author = {Pezz{\'e}, Luca and Smerzi, Augusto},
  year = 2008,
  journal = {Physical Review Letters},
  volume = {100},
  number = {7},
  pages = {073601},
  publisher = {American Physical Society},
  doi = {10.1103/PhysRevLett.100.073601}
}

@article{pittman_can_1996,
  title = {Can {{Two-Photon Interference}} Be {{Considered}} the {{Interference}} of {{Two Photons}}?},
  author = {Pittman, T. B. and Strekalov, D. V. and Migdall, A. and Rubin, M. H. and Sergienko, A. V. and Shih, Y. H.},
  year = 1996,
  journal = {Physical Review Letters},
  volume = {77},
  number = {10},
  pages = {1917--1920},
  publisher = {American Physical Society},
  doi = {10.1103/PhysRevLett.77.1917}
}

@article{polino_photonic_2020,
  title = {Photonic Quantum Metrology},
  author = {Polino, Emanuele and Valeri, Mauro and Spagnolo, Nicol{\`o} and Sciarrino, Fabio},
  year = 2020,
  journal = {AVS Quantum Science},
  volume = {2},
  number = {2},
  pages = {024703},
  issn = {2639-0213},
  doi = {10.1116/5.0007577}
}

@article{reck_experimental_1994,
  title = {Experimental Realization of Any Discrete Unitary Operator},
  author = {Reck, Michael and Zeilinger, Anton and Bernstein, Herbert J. and Bertani, Philip},
  year = 1994,
  journal = {Physical Review Letters},
  volume = {73},
  number = {1},
  pages = {58--61},
  publisher = {American Physical Society},
  doi = {10.1103/PhysRevLett.73.58}
}

@article{shchesnovich_partial_2015,
  title = {Partial Indistinguishability Theory for Multiphoton Experiments in Multiport Devices},
  author = {Shchesnovich, V. S.},
  year = 2015,
  journal = {Physical Review A},
  volume = {91},
  number = {1},
  pages = {013844},
  publisher = {American Physical Society},
  doi = {10.1103/PhysRevA.91.013844}
}

@article{silverstone_-chip_2014,
  title = {On-Chip Quantum Interference between Silicon Photon-Pair Sources},
  author = {Silverstone, J. W. and Bonneau, D. and Ohira, K. and Suzuki, N. and Yoshida, H. and Iizuka, N. and Ezaki, M. and Natarajan, C. M. and Tanner, M. G. and Hadfield, R. H. and Zwiller, V. and Marshall, G. D. and Rarity, J. G. and O'Brien, J. L. and Thompson, M. G.},
  year = 2014,
  journal = {Nature Photonics},
  volume = {8},
  number = {2},
  pages = {104--108},
  publisher = {Nature Publishing Group},
  issn = {1749-4893},
  doi = {10.1038/nphoton.2013.339},
  copyright = {2013 Springer Nature Limited}
}

@article{spagnolo_three-photon_2013,
  title = {Three-Photon Bosonic Coalescence in an Integrated Tritter},
  author = {Spagnolo, Nicol{\`o} and Vitelli, Chiara and Aparo, Lorenzo and Mataloni, Paolo and Sciarrino, Fabio and Crespi, Andrea and Ramponi, Roberta and Osellame, Roberto},
  year = 2013,
  journal = {Nature Communications},
  volume = {4},
  number = {1},
  pages = {1606},
  publisher = {Nature Publishing Group},
  issn = {2041-1723},
  doi = {10.1038/ncomms2616},
  copyright = {2013 The Author(s)}
}

@article{tichy_interference_2014,
  title = {Interference of {{Identical Particles}} from {{Entanglement}} to {{Boson-Sampling}}},
  author = {Tichy, Malte C.},
  year = 2014,
  journal = {Journal of Physics B: Atomic, Molecular and Optical Physics},
  volume = {47},
  number = {10},
  eprint = {1312.4266},
  primaryclass = {quant-ph},
  pages = {103001},
  issn = {0953-4075, 1361-6455},
  doi = {10.1088/0953-4075/47/10/103001},
  archiveprefix = {arXiv}
}

@article{tichy_many-particle_2012,
  title = {Many-Particle Interference beyond Many-Boson and Many-Fermion Statistics},
  author = {Tichy, Malte C and Tiersch, Markus and Mintert, Florian and Buchleitner, Andreas},
  year = 2012,
  journal = {New Journal of Physics},
  volume = {14},
  number = {9},
  pages = {093015},
  publisher = {IOP Publishing},
  issn = {1367-2630},
  doi = {10.1088/1367-2630/14/9/093015}
}

@article{tichy_zero-transmission_2010,
  title = {Zero-{{Transmission Law}} for {{Multiport Beam Splitters}}},
  author = {Tichy, Malte Christopher and Tiersch, Markus and {de Melo}, Fernando and Mintert, Florian and Buchleitner, Andreas},
  year = 2010,
  journal = {Physical Review Letters},
  volume = {104},
  number = {22},
  pages = {220405},
  publisher = {American Physical Society},
  doi = {10.1103/PhysRevLett.104.220405}
}
